\def\qi#1 {\fbox {\footnote {\ }}\ \footnotetext { From Qi: {\color{red}#1}}}
\theoremstyle{remark} 
\newtheorem{theorem}{{{\textit{Theorem}}}}
\newtheorem{lemma}{{{\textit{Lemma}}}}
\newtheorem{corollary}{{{{\textit{Corollary}}}}}
\newtheorem{remark}{{{\textit{Remark}}}}
\newtheorem{example}{{{\textit{Example}}}}
\newtheorem{construction}{{{\textit{Construction}}}}
\title{New Constructions of Locally Perfect Nonlinear Functions and Their Application to Sequence Sets With Low Ambiguity Zone}
	\author{Zhiye Yang, Zheng Wang, Huaning Liu, and Keqin Feng 
	\thanks{Zhiye Yang and Huaning Liu are with the Research Center for Number Theory and Its Applications
School of Mathematics, Northwest University, Xi’an, 710127, Shaanxi, China.
(e-mail: zyyang02@126.com, hnliu@nwu.edu.cn)

		Zheng Wang is with the School of Mathematics, Southwest Jiaotong University, Chengdu, 611756, China. (e-mail: wang\_z@my.swjtu.edu.cn).


		Keqin Feng is with the Department of Mathematical Sciences, Tsinghua University, Beijing, 100084, China.
(e-mail: fengkq@tsinghua.edu.cn)

%
	}
}
\begin{document}
	\maketitle
	\begin{center}
		\textbf{Abstract}
	\end{center}

	Low Ambiguity Zone (LAZ) sequences play a pivotal role in modern integrated sensing and communication (ISAC) systems. Recently, Wang \textit{et al.} [arXiv:2501.11313] proposed a definition of locally perfect nonlinear functions (LPNFs) and constructed three classes of both periodic and aperiodic LAZ sequence sets with flexible parameters by applying such functions and interleaving techniques. Some of these LAZ sequence sets are asymptotically optimal with respect to the Ye-Zhou-Fan-Liu-Lei-Tang bounds under certain conditions. In this paper, we present constructions of three new classes of LPNFs with new parameters.  Based on these LPNFs, we further propose a series of LAZ sequence sets that offer more flexible parameters. Furthermore, our results show that some of these classes are asymptotically optimal in both the periodic and aperiodic cases, respectively.

	\begin{center}
		\textbf{Index Terms}
	\end{center}
	
Ambiguity function (AF), Low Ambiguity Zone (LAZ), Locally Perfect Nonlinear Function (LPNF), Sequence set.

	\section{Introduction}
	
	To design sequence sets with various capabilities is one of the significant topics in the research field of communication and radar systems. A critical challenge now lies in the design of sequences that exhibit desirable correlation properties not only in the time domain but also
		across phase shifts, as effectively captured by the ambiguity function (AF). The AF is a important metric for evaluating the communication and sensing capabilities of sequences\cite{GolombG}, sequences exhibiting zero or low AF sidelobes are essential for mitigating the Doppler effect in advanced integrated sensing and communication (ISAC) systems, enabling accurate signal detection and processing in highly dynamic environments\cite{Hassanien}.
    
	In 2013, Ding and Feng \textit{et al.}\cite{DingFeng2013} investigated the theoretical lower bound of the maximum sidelobe of the AF based on the Welch bound. Additionally, they presented sequences possessing favorable AF characteristics. Notably, a single sequence among these achieves the desired theoretical lower bound across the entire delay-Doppler (DD) region. Moreover, Wang and Gong \textit{et al.}\cite{Gong2013, WangGong2011, WangGong2013} proposed several classes of sequences and sequence sets with low AF magnitudes in the entire DD region, leveraging the excellent algebraic properties of additive and multiplicative characters over finite fields. In addition to constructing sequences with low AF sidelobes based on finite field theory, researchers have also attempted to design sequences through optimization algorithms \cite{Soltanalian12, Cui13, Stoica14, Razaviyayn15}. However, some of these sequences exhibit low AF sidelobes only in specific DD regions.
	
	In fact, for many practically interesting applications in the ISAC system, the Doppler frequency range is typically much narrower than the bandwidth of the transmitted signal \cite{duggal2020doppler, kumari}. Therefore, it's likely that there's no need to take into account the entire duration of the signal. In other words, in the study of different applications, it is usually sufficient to focus on the AF sidelobes in the region relevant to specific application requirements. 
	Based on this idea, in the past five years, people have made breakthroughs in the research of AF. 
    In \cite{Ye2022}, Ye \textit{et al.} first introduced a new definition regarding the low AF sidelobes in a specific region, termed the low/zero ambiguity zone (LAZ/ZAZ). Therefore, they referred to the sequence sets that meet the requirements of the LAZ/ZAZ as LAZ/ZAZ sequence sets, and derived the theoretical lower bounds for the periodic and aperiodic AF sidelobes of unimodular sequence sets in the LAZ, respectively. These bounds are collectively known as the Ye--Zhou--Fan--Liu--Lei--Tang bounds. For brevity, we refer to them as the ``YZFLLT bounds'' throughout this paper.
    By applying cubic sequences, certain quadratic phase sequences, and cyclic difference sets, they also presented four types of the most suitable constructions of sequence families, and these sequences achieve the lower bounds. Meng \textit{et al.} \cite{meng2024new} recently established a tighter aperiodic bound compared to [28, Theorem 4] under specific LAZ sequence parameters.
    Subsequently, Cao \textit{et al.}\cite{Cao2022} identified that a binary sequence set proposed in \cite{ZhouTangGong2008} possesses low ambiguity characteristics over a large area. Note that these sequences are interleaved sequences. Thereafter, Tian \textit{et al.}\cite{Tian2024} defined a set of LAZ sequences based on the perfect nonlinear functions (PNFs). Very recently, Wang \textit{et al.}\cite{WangZhou2025} generalized the concept of PNFs, introducing locally perfect nonlinear functions (LPNFs), and presented three constructions of LPNFs. And then, by directly applying these LPNFs and the interleaving techniques, they gave a series of LAZ sequence sets with flexible parameters. Such sequence sets with selecting appropriate parameters are asymptotically optimal with respect to YZFLLT bounds in both periodic and aperiodic cases, respectively. In addition, Wang \textit{et al.} \cite{Wang25_2} proposed an asymptotically optimal LAZ sequence set based on a cubic function.
    Consequently, this drives us to develop sequences with zero or low AF sidelobe characteristics in desired regions. 
	The specific parameters of the above mentioned sequences are listed in Table \ref{Table1}. In column 6 of Table \ref{Table1}, P.O, P.AO, and AP.AO. mean optimal with periodic, asymptotically optimal with periodic, and asymptotically optimal with aperiodic, respectively.
	\begin{table}[h]
		\centering
		\caption{Asymptotically Optimal or Optimal LAZ/ZAZ sequence sets}
		\label{Table1}
		\begin{adjustbox}{max width=\textwidth}
		\begin{tabular}{cccccccc}
			\Xhline{1pt}
			Ref.  & Length& Set size & $Z_x$ & $Z_y$ & ${\theta}_{\text{max}}(\hat{\theta}_{\text{max}})$ & Optimality &  \,\,Constraints\\
			\Xhline{1.5pt}
			\cite{DingFeng2013} & $q-1$      & $1$ & $q-1$ & $q-1$ & $\sqrt{q}$ & P.O.  & $q=p^l$, $\textrm{$p$ is a prime}$.\\
			\cite{Ye2022} & $p$      & $1$ & $p$ & $p$ & $\sqrt{p}$ & P.O.  &  $\textrm{$p$\,is an odd prime}$.\\
			\cite{Ye2022} & $L$      & $1$ & $\frac{L}{r}$ & $r$ & $0$ & P.O.  &   \makecell[c]{$\gcd(a,L)=1\,\textrm{if $L$ is odd },$\\$r=\gcd(2a,L)>1$.}\\
			\cite{Ye2022} & $L$      & $N$ & $\frac{\lfloor L/N\rfloor}{r}$ & $r$ & $0$ & \makecell[c]{P.O, \\if $N|L$.}  &   \makecell[c]{$\gcd(a,L)=1\,\textrm{if L is odd },$\\$r=\gcd(2a,L)>1$.}\\
			\cite{Tian2024} & $MN^2$      & $MN$ & $\lfloor \frac{N}{K}\rfloor$ & $K$ & $0$ & P.AO.  & $K<N$, $\gcd(K,N)=1$.\\
			\cite{Tian2024} & $N(KN+P)$      & $N$ & $N$ & $K$ & $0$ & P.AO.  & $\gcd(P,NK)=1$.\\
			\cite{Tian2024} & $p(p-1)$      & $p$ & $p-1$ & $p$ & $\sqrt{p}$ & P.AO.  & $\textrm{$p$\,is an odd prime}$.\\
            \cite{Wang25_2} & $N$      & $M$ & $p$ & $\lfloor\frac{N}{M}\rfloor$ & $\sqrt{N}$& \makecell[c]{P.AO}  & \makecell[c]{$N$ \textrm{\,is an odd number,} $p$ \textrm{is the} \\      \textrm{smallest prime of} $N$, $M\leq N$.}\\
			\cite{WangZhou2025} & $N^2$      & $N$ & $p$ & $N$ & \makecell[c]{$N$,\\$N+p-1$} & \makecell[c]{P.AO,\\AP.AO.}  & \makecell[c]{$N$ \textrm{\,is an odd number,}\\ $p|N(\textrm{the smallest prime})$.}\\
			\cite{WangZhou2025} & $NK$      & $N$ & $p$ & $K-N+1$  & \makecell[c]{$K$,\\$K+p-1$} & \makecell[c]{P.AO,\\AP.AO.}  & \makecell[c]{$N$ \textrm{\,is an odd number,} \\$N<K<2N-1$,\\$p|N(\textrm{the smallest prime})$.}\\
			\makecell[c]{$\textrm{Theorem}$\,\ref{optimalc1} (i),\\$\textrm{Theorem}$\,\ref{optimalc2} (i)}& $NK$      & $N$ & $p-1$ & $K$ & \makecell[c]{$K$,\\$K+p-1$} & \makecell[c]{P.AO,\\AP.AO.}  & \makecell[c]{$K=p^e, \textrm{$p$\,is an odd prime},$\\$N=\varphi(p^e)$.}\\
            			$\textrm{Theorem}$\,\ref{optimalc1} (ii)& $NK$      & $N$ & $N$ & $p$& $K+N-1$ & \makecell[c]{P.AO}  & \makecell[c]{$K=p^e, \textrm{$p$\,is an odd prime},$\\$N=\varphi(p^e)$.}\\
                       $\textrm{Theorem}$\,\ref{optimal3c} & $NK_1$      & $N$ & $p-1$ & $K_1-K+2$ & \makecell[c]{$K_1$,\\$K_1+p-2$}& \makecell[c]{P.AO,\\AP.AO.}  & \makecell[c]{$K=p^e, \textrm{$p$\,is an odd prime},$\\$K\leq K_1<2 K$, $N=\varphi(p^e)$.}\\
			\Xhline{1pt}
		\end{tabular}
		\end{adjustbox}
	\end{table}


	In this paper, we show some generic new constructions on LPNFs in Section 3, and by using these new LPNFs, we present a series of LAZ
	sequence sets with new parameters. Some sequence sets of the proposed are asymptotically optimal with respect to YZFLLT bounds both periodic and aperiodic cases, respectively. Before doing our main results, in Section 2, we introduce some preliminaries on LAZ sequence sets and LPNFs, including
	some related results we need in this paper.

	\section{LAZ Sequence Sets and Locally Perfect Nonlinear Functions}
	In this section, we review some fundamental concepts related to LAZ sequence sets and LPNFs, along with several results that will be used in the subsequent analysis.

	\subsection{LAZ Sequence Sets}\label{GR}
	
	Throughout the rest of the paper, let $\mathbb{C}$ denote the set of all complex numbers and $\mathcal{S}$ denote the set of sequences of length $D$ which has a size of $M$, all sequences
	$\mathbf{a}=(a(0),a(1),a(2),\cdots,a(D-1))$ are regarded as \textit{unimodular sequences} with length $D$.
	Specifically, ${a}(k)\in \mathbb{C}$ and $|{a}(k)|=1$ for all $0\leq k\leq {D-1}$. In particular, $D$ is the period of $\mathbf{a}$ if ${a}(k+D)={a}(k)$ for all $k\geq0$. $\omega_{N}=e^{\frac{2\pi\sqrt{-1}}{N}}=e^{\frac{2\pi i}{N}}$ is a $N$-th primitive complex root of unity.
	
	Let $\mathbf{a}$ and $\mathbf{b}$ be unimodular sequences with length $D$. Then the \textit{aperiodic cross-ambiguity function (cross-AF)} of $\mathbf{a}$ and $\mathbf{b}$ at time shift $\tau$ and Doppler shift $\nu$ is defined by:
	\begin{equation}\nonumber\label{APAF}
		\widehat{AF}_{\mathbf{a},\mathbf{b}}(\tau,\nu)=\left\{\begin{array}{cl}
			\sum\limits_{t=0}^{D-1-\tau}a(t)\overline{b}(t+\tau)\omega_{D}^{\nu t}, &0\leq\tau\leq D-1, \\
			\sum\limits_{t=-\tau}^{D-1}a(t)\overline{b}(t+\tau)\omega_{D}^{\nu t}, &-(D-1)\leq\tau<0 , \\
			0, &|\tau|\geq D.
		\end{array}\right.
	\end{equation}
	where for $c\in\mathbb{C}$, $\overline{c}$ represents complex conjugate of $c$.\\
	If both of $\mathbf{a}$ and $\mathbf{b}$ are periodic with period $D$, the \textit{periodic cross-AF} of $\mathbf{a}$ and $\mathbf{b}$ is defined by:
	\begin{equation*}\label{PAF}
		{AF}_{\mathbf{a},\mathbf{b}}(\tau,\nu)=\sum\limits_{t=0}^{D-1}a(t)\overline{b}(t+\tau)\omega_{D}^{\nu t},
	\end{equation*}
	where $t+\tau$ to be taken modulo $D$. We denote $\widehat{AF}_{\mathbf{a}}(\tau,\nu)=\widehat{AF}_{\mathbf{a},\mathbf{a}}(\tau,\nu)$ and ${AF}_{\mathbf{a}}(\tau,\nu)={AF}_{\mathbf{a},\mathbf{a}}(\tau,\nu)$, called the \textit{auto-AF} of $\mathbf{a}$, aperiodic and periodic, respectively.
	
	For a sequence set $\mathcal{S}$ that consists of $M$ sequences of length $D$, the \textit{maximum periodic AF magnitude} of $\mathcal{S}$  over a region $\Pi=(-Z_x,Z_x)\times(-Z_y,Z_y)\subseteq(-D,D)\times(-D,D)$ is defined as $$\theta_{\text{max}}(\mathcal{S})=\{\theta_{\text{A}}(\mathcal{S}),\theta_{\text{C}}(\mathcal{S})\},$$ where $\theta_{\text{A}}(\mathcal{S})$ denotes the maximum periodic auto-AF magnitude by
	\begin{equation*}\label{PAF}
		\theta_{\text{A}}(\mathcal{S})={\text{max}}\{|{AF}_{\mathbf{a}}(\tau,\nu)|:\mathbf{a}\in\mathcal{S},(0,0)\neq(\tau,\nu)\in\Pi\}
	\end{equation*}
	and $\theta_{\text{C}}(\mathcal{S})$ denotes the maximum periodic cross-AF magnitude by
	\begin{equation*}\label{PAF}
		\theta_{\text{C}}(\mathcal{S})={\text{max}}\{|{AF}_{\mathbf{a},\mathbf{b}}(\tau,\nu)|:\mathbf{a}\neq\mathbf{b}\in\mathcal{S},(\tau,\nu)\in\Pi\}
	\end{equation*}
	Such a sequence set $\mathcal{S}$ is denoted by $(M, D,\Pi,\theta_{\text{max}})$-LAZ periodic sequence set, where $M$ is the set size, $D$ is the length of sequence, $\Pi$ is the low ambiguity zone, as well as $\theta_{\text{max}}$ denotes the maximum periodic AF magnitude in region $\Pi$.
	
	For aperiodic case, we define similarly $\hat{\theta}_{\text{max}}(\mathcal{S})=\{\hat{\theta}_{\text{A}}(\mathcal{S}),\hat{\theta}_{\text{C}}(\mathcal{S})\},$ to be the maximum aperiodic AF magnitude over a region $\Pi$, where
	\begin{equation*}\label{PAF}
		\hat{\theta}_{\text{A}}(\mathcal{S})={\text{max}}\{|\widehat{AF}_{\mathbf{a}}(\tau,\nu)|:\mathbf{a}\in\mathcal{S},(0,0)\neq(\tau,\nu)\in\Pi\}
	\end{equation*}
	\begin{equation*}\label{PAF}
		\hat{\theta}_{\text{C}}(\mathcal{S})={\text{max}}\{|\widehat{AF}_{\mathbf{a},\mathbf{b}}(\tau,\nu)|:\mathbf{a}\neq\mathbf{b}\in\mathcal{S},(\tau,\nu)\in\Pi\}\,\,\,\,\,
	\end{equation*}
	Such a sequence set $\mathcal{S}$ is denoted by $(M,D,\Pi,\hat{\theta}_{\text{max}})$-LAZ aperiodic sequence set.
	
	As mentioned, in \cite{Ye2022}, authors established lower bounds $\Delta$ and $\hat{\Delta}$ of $\theta_{\text{max}}$ and $\hat{\theta}_{\text{max}}$, respectively. The following theorem is a consequence of \cite{Ye2022}.

	\begin{lemma} [Ye--Zhou--Fan--Liu--Lei--Tang bounds \cite{Ye2022}] \label{2013bound}
		For any unimodular periodic or aperiodic sequence set: $\mathcal{S}=(M,D,\Pi,\theta_{\text{max}})$ or $\mathcal{S}=(M,D,\Pi,\hat{\theta}_{\text{max}})$, where $\Pi=(-Z_x,Z_x)\times(-Z_y,Z_y)$. We have
		
		(i)\,\,$\theta_{\text{max}}\geq\Delta$, where $\Delta=\frac{D}{\sqrt{Z_y}}\sqrt{\frac{MZ_xZ_y-D}{D(MZ_x-1)}}$;
		
		(ii)\,\,$\hat{\theta}_{\text{max}}\geq\hat{\Delta}$, where $\hat{\Delta}=\frac{D}{\sqrt{Z_y}}\sqrt{\frac{MZ_xZ_y-D-Z_x+1}{(MZ_x-1)(D+Z_x-1)}}$.
	\end{lemma}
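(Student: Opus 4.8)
The plan is to treat this as a Welch--Rankin type lower bound and prove it by a frame-potential (second-moment) argument. The first step is to turn the AF values into inner products of a finite family of vectors. For the periodic case I would form, for every $\mathbf{a}\in\mathcal{S}$ and every time shift $i\in\{0,1,\dots,Z_x-1\}$, the cyclically shifted vector $\mathbf{a}^{(i)}=(a(i),a(i+1),\dots)\in\mathbb{C}^{D}$; this gives $MZ_x$ vectors, each of squared norm $D$. A direct computation shows $|{AF}_{\mathbf{a},\mathbf{b}}(j-i,\nu)|=\big|\sum_t a(i+t)\overline{b(j+t)}\,\omega_D^{\nu t}\big|$, i.e. every in-zone AF magnitude is the modulus of a modulated inner product between two of these shifted vectors, with the shift difference $j-i$ ranging over $(-Z_x,Z_x)$. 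For the aperiodic case the same vectors are instead \emph{zero-padded} and regarded as living in $\mathbb{C}^{D+Z_x-1}$, which is the source of the dimension $D+Z_x-1$ appearing in $\hat{\Delta}$.

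The key identity I would exploit is the Doppler Parseval/Plancherel relation: for each fixed shift difference $\tau$, the map $\nu\mapsto {AF}_{\mathbf{a},\mathbf{b}}(\tau,\nu)$ is the length-$D$ DFT of $g_\tau(t)=a(t)\overline{b(t+\tau)}$, whence $\sum_{\nu=0}^{D-1}|{AF}_{\mathbf{a},\mathbf{b}}(\tau,\nu)|^{2}=D\sum_t|g_\tau(t)|^{2}=D^{2}$ in the periodic case (and $D(D-|\tau|)$ in the aperiodic case, because the summation range there depends on $\tau$). With this in hand the steps are: (i) write down the double sum of $|{AF}|^{2}$ taken over all ordered pairs of shifted vectors and over the relevant Doppler bins; (ii) lower-bound it by Cauchy--Schwarz applied to the frame operator $S=\sum \mathbf{a}^{(i)}(\mathbf{a}^{(i)})^{*}$, using $\operatorname{tr}(S^{2})\ge(\operatorname{tr}S)^{2}/\dim$ with $\dim=D$ (resp. $D+Z_x-1$); (iii) upper-bound it by peeling off the trivial terms at the origin $(\tau,\nu)=(0,0)$, each equal to $D^{2}$ and $MZ_x$ in number, and bounding every remaining in-zone term by $\theta_{\text{max}}^{2}$ (resp. $\hat\theta_{\text{max}}^{2}$); and (iv) solving the resulting inequality for $\theta_{\text{max}}$. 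Reading off the algebra reproduces the numerator $MZ_xZ_y-D$ (periodic) and $MZ_xZ_y-D-Z_x+1$ (aperiodic) and the factor $MZ_x-1$.

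The delicate point—the main obstacle—is obtaining the \emph{sharp} $1/Z_y$ scaling rather than a weaker constant. A naive Welch bound applied directly to the full Gabor system $\{E_\nu\mathbf{a}^{(i)}:0\le i\le Z_x-1,\ 0\le\nu\le Z_y-1\}$ of $MZ_xZ_y$ vectors yields only $\theta_{\text{max}}^{2}\ge D(MZ_xZ_y-D)/(MZ_xZ_y-1)$, which is strictly weaker than the claimed $\Delta^{2}=D(MZ_xZ_y-D)/\big(Z_y(MZ_x-1)\big)$. Getting the stronger denominator $Z_y(MZ_x-1)$ requires combining the \emph{time} Welch bound—which controls the Doppler-$0$ correlation energy and produces the factor $MZ_x-1$—with the \emph{Doppler} Parseval spread that distributes this energy across the $Z_y$ bins and contributes the $1/Z_y$; the care is entirely in the correct weighting/averaging over the Doppler window so that the energy budget is not double counted. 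The aperiodic case carries the extra bookkeeping of the $\tau$-dependent summation limits, which is exactly what yields the $-Z_x+1$ correction and the dimension $D+Z_x-1$. Since the present statement is an immediate specialization of the bounds established in \cite{Ye2022}, an alternative and shorter route is simply to invoke those bounds directly; I would present the frame-theoretic derivation above only to the extent needed to see that our parameters $(M,D,\Pi)$ satisfy their hypotheses.
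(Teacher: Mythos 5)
The paper offers no proof of this lemma at all: it is a background result quoted verbatim from \cite{Ye2022} (hence the attribution in its name), so the ``paper's proof'' is precisely the citation you propose as your fallback in your final sentence. On that route you and the paper coincide, and for a cited lemma that is entirely adequate.

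Your attempted frame-potential reconstruction, however, stalls exactly at the point you yourself flag as the main obstacle, and the remedy you gesture at is not the one that works. You correctly note that the naive Welch bound over the full Gabor system $\{E_\nu T^i\mathbf{a}\}$ of $MZ_xZ_y$ vectors only yields the denominator $MZ_xZ_y-1$, but the fix is not some ``careful weighting/averaging over the Doppler window'': it is a single observation your sketch never uses, namely that for \emph{unimodular} sequences the auto-AF at zero delay and nonzero Doppler vanishes identically, $AF_{\mathbf{a}}(0,\nu)=\sum_{t=0}^{D-1}|a(t)|^2\omega_D^{\nu t}=\sum_{t=0}^{D-1}\omega_D^{\nu t}=0$, and likewise for the zero-padded (aperiodic) vectors, since the nonzero entries occupy $D$ consecutive positions of modulus one. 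Consequently, in the frame potential every ordered pair of vectors differing \emph{only} in the Doppler index contributes exactly zero rather than merely at most $\theta_{\max}^2$; there are $MZ_xZ_y(Z_y-1)$ such pairs, and excluding them from the $\theta_{\max}$-bounded count replaces $(MZ_xZ_y)^2-MZ_xZ_y$ by $(MZ_xZ_y)^2-MZ_xZ_y^2=MZ_xZ_y\cdot Z_y(MZ_x-1)$. Running your steps (i)--(iv) with this refinement gives
\begin{equation*}
\frac{(MZ_xZ_yD)^2}{d}\;\le\; MZ_xZ_y\,D^2\;+\;MZ_xZ_y\,Z_y(MZ_x-1)\,\theta_{\max}^2,
\end{equation*}
where $d=D$ in the periodic case and $d=D+Z_x-1$ in the aperiodic zero-padded case; solving for $\theta_{\max}$ reproduces $\Delta$ and $\hat{\Delta}$ exactly, including the $-Z_x+1$ correction. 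So your skeleton is the right one (and is essentially the argument underlying \cite{Ye2022}), but as written the sharp $1/Z_y$ scaling remains unproved: without the vanishing observation, your proposal delivers only the weaker bound you concede, and its claim to recover the stated constants is unsupported.
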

	Further assume that $\rho_{\text{LAZ}}=\frac{\theta_{\text{max}}}{\Delta}(\geq1)$ and $\hat{\rho}_{\text{LAZ}}=\frac{\hat{\theta}_{\text{max}}}{\hat{\Delta}}(\geq1)$, $\Delta$ and $\hat{\Delta}$ are YZFLLT bounds, $\rho_{\text{LAZ}}$ and $\hat{\rho}_{\text{LAZ}}$ are conventionally referred to as \textit{optimality factors}. The optimality factors $\rho_{\text{LAZ}}$ and $\hat{\rho}_{\text{LAZ}}$ serve as metrics to evaluate how closely the constructed sequences approach the theoretical lower bounds in both periodic and aperiodic cases, respectively. i.e.
	
	$\bullet$\,\,If ${\rho_{\text{LAZ}}}=1$ or $\hat{\rho}_{\text{LAZ}}$=1, the sequence set $\mathcal{S}$ is called \textit{optimal} in periodic and aperiodic case, respectively;
	
	$\bullet$\,\,If $\lim\limits_{D \to \infty} \rho_{\text{LAZ}}=1$ or $\lim\limits_{D \to \infty} \hat{\rho}_{\text{LAZ}}=1$, the sequence set $\mathcal{S}$ is called \textit{asymptotically optimal} in periodic and aperiodic case, respectively.

\begin{remark}
 When the parameters $M$, $Z_x$, and $Z_y$ satisfy either $Z_x > \sqrt{\frac{3 N^2}{M Z_y}}$ with $M Z_y \geq 3$, or $Z_x > \frac{\pi}{\gamma}$ with $5 \leq M Z_y \leq N^2$, where $\gamma = \arccos\left(1 - \frac{M Z_y}{N^2}\right)$, a tighter aperiodic AF lower bound was derived by Meng \textit{et al.} in \cite[Corollaries 2 and 3]{meng2024new}, improving upon the result given in \cite[Theorem 4]{Ye2022}. Nevertheless, since our constructed LAZ sequence set does not fulfill either of these conditions, we adopt the lower bound for aperiodic AF provided by Ye \textit{et al.} in \cite[Theorem 4]{Ye2022}.
\end{remark}
	
	
	\subsection{Locally Perfect Nonlinear Functions}\label{LPNF}
	
	Let $K\geq N$ be two positive integers, and $f:$ $\mathbb{Z}_N \rightarrow \mathbb{Z}_K$ be a function, where $\mathbb{Z}_N=\mathbb{Z}/N\mathbb{Z}$. For $0<Z_x\leq N$ and $0<Z_y\leq K$, The function $f$ is called a \textit{locally perfect nonlinear function} and denoted by $\langle N,K,Z_x,Z_y\rangle$-LPNF if for any $-Z_y<b<Z_y$ and $-Z_x<a<Z_x$, $a\neq0$,
	the equation $f(x+a)-f(x)=b$ has at most one solution $x\in\mathbb{Z}_N$.
	
	\begin{remark}
If \( Z_x \geq \left\lceil \frac{N+1}{2} \right\rceil \), then the interval \( -Z_x < a < Z_x \) covers all elements \( a \in \mathbb{Z}_N \). Similarly, if \( Z_y \geq \left\lceil \frac{K+1}{2} \right\rceil \), then \( -Z_y < b < Z_y \) covers all elements \( b \in \mathbb{Z}_K \).
	\end{remark}
	
	The notion of LPNF was introduced in \cite{WangZhou2025} and successfully used to construct LAZ sequence sets by the interleaving technique. The result is as follows.
	
	\begin{lemma}[\cite{WangZhou2025} Theorems 1 and 2]\label{2025sset}
	From an \( \langle N, K, Z_x, Z_y \rangle \)-LPNF, a LAZ sequence set can be constructed with parameters \( (M, D, \Pi, \theta_{\max})\) or \( (M, D, \Pi, \hat{\theta}_{\max}) \), where \( M = N \), \( D = NK \), \( \Pi = (-Z_x, Z_x) \times (-Z_y, Z_y) \), \( \theta_{\max} = K \), and \( \hat{\theta}_{\max} = K + Z_x - 1 \).
	
		
	\end{lemma}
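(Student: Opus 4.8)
The plan is to prove Lemma~\ref{2025sset} constructively: I would exhibit the interleaved sequence set explicitly and then bound its ambiguity function over $\Pi$ by reducing the key exponential sum to the solution count of $f(x+a)-f(x)=b$, which the LPNF property controls. Indexing the set by $g\in\mathbb{Z}_N$ and writing each time index uniquely as $t=n+Nm$ with $n\in\mathbb{Z}_N$ and $m\in\{0,1,\dots,K-1\}$, the natural choice (the interleaving construction of \cite{WangZhou2025}) is
\begin{equation*}
s_g(n+Nm)=\omega_K^{m\,f(n)}\,\omega_N^{g n},
\end{equation*}
which is unimodular, has period $D=NK$, and yields $M=N$ sequences. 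The virtue of this layout is that the Doppler weight factors as $\omega_{NK}^{\nu t}=\omega_{NK}^{\nu n}\omega_K^{\nu m}$, a time shift $\tau$ with $|\tau|<Z_x\le N$ shifts the inner coordinate $n$ by $a=\tau$ (up to a carry into $m$), and the chirp exponent $m\,f(n)$ is exactly what converts a completed sum over $m$ into an indicator of $f(n+a)-f(n)\equiv b$.

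For the periodic bound I would substitute this into ${AF}_{\mathbf{s}_g,\mathbf{s}_{g'}}(\tau,\nu)=\sum_{t=0}^{NK-1}s_g(t)\overline{s_{g'}(t+\tau)}\,\omega_{NK}^{\nu t}$ and sum over $m$ first. Because $t+\tau$ is read modulo $NK$, every inner coordinate $n$ sees a complete sum over $m\in\mathbb{Z}_K$, and orthogonality gives $\sum_{m}\omega_K^{m(f(n)-f(n+\tau)+\nu)}=K\cdot\mathbf{1}\!\left[f(n+\tau)-f(n)\equiv\nu\bmod K\right]$ (a carry inserts only a unimodular factor and, since $n+\tau-N\equiv n+\tau\pmod N$, leaves the congruence unchanged). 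Thus the whole sum collapses to $K$ times a $g$-weighted count of solutions $n\in\mathbb{Z}_N$ of $f(n+a)-f(n)\equiv b$, with $a=\tau$, $b=\nu$ and hence $-Z_x<a<Z_x$, $-Z_y<b<Z_y$. For $(\tau,\nu)\neq(0,0)$ in $\Pi$ one has $a\neq0$ (if $a=0$ the congruence forces $b=0$), so the LPNF property leaves at most one surviving term, giving $|{AF}|\le K$ when $g=g'$; when $g\neq g'$ and $a=0$ the leftover factor $\omega_N^{(g-g')n}$ makes the $n$-sum a nontrivial complete character sum, hence $0$, and when $a\neq0$ the same single-solution bound applies. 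Therefore $\theta_{\max}=K$.

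For the aperiodic bound I would repeat the computation with the truncated range $0\le t\le NK-1-\tau$ (and symmetrically for $\tau<0$) and no wraparound. Here the inner coordinates split into the $n$ with $n+\tau<N$, for which $m$ still runs over the full $\mathbb{Z}_K$ and which therefore reproduce the periodic analysis with total contribution at most $K$, and the at most $Z_x-1$ ``carry'' coordinates $n$ with $n+\tau\ge N$, for which the upper truncation limits $m$ to $\{0,\dots,K-2\}$. For each carry coordinate the $m$-sum is an incomplete geometric series $\sum_{m=0}^{K-2}\zeta^{m}$ with $\zeta=\omega_K^{f(n)-f(n+\tau)+\nu}$; this equals $K-1$ when $\zeta=1$ and has magnitude exactly $1$ otherwise. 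Since by the LPNF property the congruence $f(n+\tau)-f(n)\equiv\nu$ (i.e.\ $\zeta=1$) holds for at most one $n$ in all of $\mathbb{Z}_N$, the worst case is one completed sum of size $K$ together with at most $Z_x-1$ carry terms of magnitude $1$ each, so $|\widehat{AF}|\le K+(Z_x-1)$, giving $\hat{\theta}_{\max}=K+Z_x-1$.

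I expect the aperiodic boundary analysis to be the main obstacle: one must track the carry bookkeeping precisely, evaluate the incomplete geometric sums (the clean cancellation to magnitude $1$ uses $\zeta^{K-1}=\zeta^{-1}$), and invoke the at-most-one-solution property to guarantee that the full-length contribution of size $K$ occurs at most once while all remaining boundary terms are uniformly bounded by $1$. Verifying that these two facts combine into the single additive constant $Z_x-1$ — and checking that the cross-ambiguity ($g\neq g'$) cases do not worsen it — is the delicate part; the periodic case is then a clean specialization in which every inner sum is complete.
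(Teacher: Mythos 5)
This lemma is not proved in the paper at all: it is imported verbatim as a quoted result from \cite{WangZhou2025} (Theorems 1 and 2), so there is no in-paper argument to compare against. Your blind reconstruction --- the interleaved construction $s_g(n+Nm)=\omega_K^{m f(n)}\omega_N^{gn}$, the factorization $\omega_{NK}^{\nu t}=\omega_{NK}^{\nu n}\omega_K^{\nu m}$, the complete $m$-sum orthogonality that converts the AF into the solution count of $f(x+\tau)-f(x)\equiv\nu \pmod K$ controlled by the LPNF property, and the carry/truncation analysis yielding the extra additive term $Z_x-1$ in the aperiodic case --- is correct and is essentially the interleaving-technique argument of the cited reference; in particular your treatment of the incomplete geometric sums (magnitude exactly $1$ via $\zeta^{K-1}=\zeta^{-1}$) and of the at-most-one completed sum of size $K$ correctly accounts for both the periodic bound $\theta_{\max}=K$ and the aperiodic bound $\hat{\theta}_{\max}=K+Z_x-1$, with the cross-AF cases ($g\neq g'$) handled by the same counting since the character factors are independent of $m$.
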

	
	Wang \textit{et al.}\cite{WangZhou2025} gave a series of LPNFs and constructed a series of LAZ sequence sets, which are presented in the above theorem. Some of them are asymptotically optimal in both periodic and aperiodic cases. In the next section, we will reveal some new constructions of LPNFs,
		which provides significant improvements over existing frameworks \cite{WangZhou2025}. And then produce a series of LAZ sequence sets with new parameters that are not covered by the parameters of \cite{WangZhou2025}. Furthermore, some classes of the proposed sequence sets demonstrates an asymptotically optimal property in both periodic and aperiodic cases.
	
	\section{Proposed Constructions of LPNFs}
	
	For $K\geq3$, let $\mathbb{Z}_K^\ast$ denote the multiplicative group of
	all invertible elements of the ring $\mathbb{Z}_K$. Then $\mathbb{Z}_K^\ast$ contains $\varphi(K)$  elements, where $\varphi(\cdot)$ is the Euler's total function. In this section, our primary objective is to construct three new classes of LPNFs.  The following lemma provides a necessary foundation for the subsequent constructions of LPNFs.
\begin{lemma}
   Let $K=p_1^{e_1}p_2^{e_2}\cdots p_s^{e_s}$, where $s\geq1$, the $p_i$ are distinct odd primes and $e_i\geq1$ with $1\leq i\leq s$. Denote $E=\textrm{lcm}\{\varphi(p_i^{e_i})=p_i^{e_i-1}(p_i-1):1\leq i\leq s\}=NR$ with $N\geq2$, $N,R\in\mathbb{Z}$. Then there is an element $\alpha$ of order $N$ in $\mathbb{Z}_K^\ast$. 
\end{lemma}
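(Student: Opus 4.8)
The plan is to use the Chinese Remainder Theorem to expose the explicit structure of $\mathbb{Z}_K^\ast$, and then to reduce the existence of an element of order $N$ to the classical fact that the exponent of a finite abelian group is attained by an actual element.

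First I would decompose the unit group. Since $K=p_1^{e_1}\cdots p_s^{e_s}$ with the $p_i$ pairwise distinct, the Chinese Remainder Theorem yields a ring isomorphism $\mathbb{Z}_K\cong\prod_{i=1}^s\mathbb{Z}_{p_i^{e_i}}$, which restricts to the group isomorphism
$$\mathbb{Z}_K^\ast\cong\prod_{i=1}^s\mathbb{Z}_{p_i^{e_i}}^\ast.$$
Because each $p_i$ is an odd prime, the primitive-root theorem guarantees that every factor $\mathbb{Z}_{p_i^{e_i}}^\ast$ is cyclic of order $\varphi(p_i^{e_i})=p_i^{e_i-1}(p_i-1)$. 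This is precisely where the hypothesis that the $p_i$ are odd is needed, since $\mathbb{Z}_{2^e}^\ast$ fails to be cyclic for $e\geq3$.

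Next I would produce an element of order exactly $E$. For a finite direct product of cyclic groups the exponent equals the least common multiple of the factor orders, so the exponent of $\mathbb{Z}_K^\ast$ is $\textrm{lcm}\{\varphi(p_i^{e_i}):1\leq i\leq s\}=E$. The key step is to realize this exponent by a genuine element: working one prime $\ell\mid E$ at a time, I would select the factor whose order contains the highest power of $\ell$, extract from its generator an element of that $\ell$-power order, and then multiply the resulting elements across all primes. Since these pieces have pairwise coprime orders, the order of the product equals the product of the orders, namely $E$; denote such an element by $\beta$, so $\mathrm{ord}(\beta)=E$.

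Finally, writing $E=NR$ forces $R\mid E$ (as $N,R$ are positive integers with $N\geq2$), whence $\gcd(E,R)=R$ and the element $\alpha=\beta^{R}$ satisfies $\mathrm{ord}(\alpha)=E/\gcd(E,R)=E/R=N$, yielding the desired element of $\mathbb{Z}_K^\ast$. I expect the genuine obstacle to lie in the middle step — showing that the exponent $E$ is actually achieved rather than merely bounding element orders — which hinges on the cyclicity of each $\mathbb{Z}_{p_i^{e_i}}^\ast$ and on correctly splicing together prime-power contributions drawn from distinct factors; the CRT decomposition and the final descent to order $N$ are routine.
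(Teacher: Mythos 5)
Your proof is correct, and its skeleton coincides with the paper's: CRT decomposition $\mathbb{Z}_K^\ast\cong\prod_{i=1}^s\mathbb{Z}_{p_i^{e_i}}^\ast$, cyclicity of each factor by the primitive-root theorem (this is indeed where oddness of the $p_i$ enters), production of an element $\beta$ of order $E$, and the descent $\alpha=\beta^R$ with $\mathrm{ord}(\alpha)=E/R=N$. The one point of divergence is the step you flag as the genuine obstacle, namely realizing the exponent $E$ by an actual element. You do this via the classical exponent-attainment argument: for each prime $\ell\mid E$, pick the factor whose order carries the maximal $\ell$-power, extract an element of exactly that order from its generator, and multiply the resulting pieces, whose pairwise coprime orders force the product to have order $E$. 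The paper short-circuits this entirely: it takes the tuple $\pi=(\pi_1,\pi_2,\dots,\pi_s)$ of generators of the factors and observes that the order of an element of a direct product is the least common multiple of the orders of its components, so $\mathrm{ord}(\pi)=\mathrm{lcm}\{\varphi(p_i^{e_i}):1\leq i\leq s\}=E$ in one line. Both arguments are sound; the paper's is shorter because the explicit product coordinates are already on the table, while yours is the proof of the more general fact that the exponent of any finite abelian group is attained by some element, and so would survive in settings where such an explicit decomposition is not available. In particular, what you anticipated as the hard step is, with the right observation, not an obstacle at all.
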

\begin{proof}
    By the Chinese remainder theorem, we have an isomorphism of rings:
		\begin{equation*}\label{PAF}
			\mathbb{Z}_K\cong\mathbb{Z}_{p_1^{e_1}}\times\mathbb{Z}_{p_2^{e_2}}\times\cdots\times\mathbb{Z}_{p_s^{e_s}},\,\,\,\,\,\,x\mapsto (x_1,\cdots x_s),
		\end{equation*}
		where $x_i\equiv x\,(\bmod\,p_i^{e_i})$, $1\leq i\leq s$, which also keeps the isomorphism of groups
		\begin{equation*}\label{PAF}
			\mathbb{Z}_K^\ast\cong\mathbb{Z}_{p_1^{e_1}}^\ast\times\mathbb{Z}_{p_2^{e_2}}^\ast\times\cdots\times\mathbb{Z}_{p_s^{e_s}}^\ast,
		\end{equation*}
		we identify $x$ with the image $(x\,(\bmod\,p_1^{e_1}),x\,(\bmod\,p_2^{e_2})\cdots x\,(\bmod\,p_i^{e_i}))$.
		
		For every $i$, $1\leq i\leq s$, the group $\mathbb{Z}_{p_i^{e_i}}^\ast$ is cyclic of order $\varphi(p_i^{e_i})$, so that $\mathbb{Z}_{p_i^{e_i}}^\ast=\langle\pi_i\rangle$ with $\pi_i$ as a generator. On the other hand, $\pi=(\pi_1,\pi_2,\cdots,\pi_s)$ is an element of $\mathbb{Z}_K^\ast$ with order $E=\textrm{lcm}\{\varphi(p_i^{e_i}):1\leq i\leq s\}$. Hence, there exist a cyclic subgroup $\langle\pi\rangle=\{1,\pi,\cdots,\pi^{E-1}\}$ of $\mathbb{Z}_K^\ast$ with order $E$. Since $E=NR$, $\alpha=\pi^R$ is an element of $\mathbb{Z}_K^\ast$ of order $N$. This completes the proof.
\end{proof}
%
	\begin{theorem}\label{NewLPNF}
    Let $K=p_1^{e_1}p_2^{e_2}\cdots p_s^{e_s}$, where $s\geq1$, the $p_i$ are distinct odd primes and $e_i\geq1$ with $1\leq i\leq s$. Denote $E=\textrm{lcm}\{\varphi(p_i^{e_i})=p_i^{e_i-1}(p_i-1):1\leq i\leq s\}=NR$ with $N\geq2$, $N,R\in\mathbb{Z}$. Define the function \( f: \mathbb{Z}_N \rightarrow \mathbb{Z}_K \) by $f(x) = \alpha^x$, where \( \alpha \in \mathbb{Z}_K^\ast \) is an element of order \( N \). Then we have
		
		(i)\,\,the function $f(x)=\alpha^x$ is 
        $\langle N,K,Z_x,Z_y\rangle$-LPNF, where $Z_x=\min\{\frac{p_i-1}{\gcd(p_i-1,R)}:1\leq i\leq s\}$ and $Z_y=K$.
		
		(ii)\,\,the function $f(x)=\alpha^x$ is $\langle N,K,Z_x,Z_y\rangle$-LPNF, where $Z_x=N$ and $Z_y=p$, $p$ is the smallest prime factor of $K$.
	\end{theorem}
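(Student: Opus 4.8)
The plan is to reduce both parts to a single criterion. Writing $f(x) = \alpha^x$, the defining equation becomes
$f(x+a)-f(x) = \alpha^{x+a}-\alpha^x = \alpha^x(\alpha^a-1) = b$ in $\mathbb{Z}_K$. Since $\alpha$ has order $N$, the map $x \mapsto \alpha^x$ is a bijection from $\mathbb{Z}_N$ onto the cyclic group $\langle \alpha \rangle \subseteq \mathbb{Z}_K^\ast$. Hence, \emph{whenever $\alpha^a - 1$ is a unit of $\mathbb{Z}_K$}, the equation forces $\alpha^x = (\alpha^a-1)^{-1} b$, a single fixed element of $\langle\alpha\rangle$, so there is at most one $x \in \mathbb{Z}_N$ solving it; this is precisely the LPNF condition. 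Thus both parts come down to showing $\alpha^a - 1 \in \mathbb{Z}_K^\ast$ under the stated constraints, where in each part the unit comes from a different side of the equation.

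For part (i) I would pass to the CRT factors of Lemma~1. An element $y \in \mathbb{Z}_{p_i^{e_i}}$ is a unit iff $p_i \nmid y$, so $\alpha^a - 1$ is a unit of $\mathbb{Z}_K$ iff for every $i$ the reduction $\bar\alpha_i := \alpha \bmod p_i$ satisfies $\bar\alpha_i^{\,a} \neq 1$ in $\mathbb{Z}_{p_i}^\ast$. I then compute the order of $\bar\alpha_i$: since $\pi_i$ generates $\mathbb{Z}_{p_i^{e_i}}^\ast$, its reduction $\bar\pi_i$ generates the cyclic group $\mathbb{Z}_{p_i}^\ast$ of order $p_i-1$, so $\bar\alpha_i = \bar\pi_i^{\,R}$ has order $t_i := \frac{p_i-1}{\gcd(p_i-1,R)}$. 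Because $0 < |a| < Z_x \le t_i$, we have $t_i \nmid a$, hence $\bar\alpha_i^{\,a} \neq 1$ for every $i$; therefore $\alpha^a - 1$ is a unit and the LPNF property follows from the criterion above. (Here $Z_y = K$ simply means $b$ ranges over all residues, imposing no extra restriction.)

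For part (ii) the unit instead comes from the $b$-side. If $b = 0$, the equation reads $\alpha^x(\alpha^a-1) = 0$; as $\alpha^x$ is a unit, a solution would force $\alpha^a = 1$, i.e.\ $N \mid a$, contradicting $a \not\equiv 0 \pmod N$, so there is no solution. If $b \neq 0$, then $0 < |b| < p$ and $p$ is the smallest prime factor of $K$, so no prime divisor of $K$ divides $b$; hence $\gcd(b,K) = 1$ and $b \in \mathbb{Z}_K^\ast$. Any solution then gives $\alpha^a - 1 = \alpha^{-x} b \in \mathbb{Z}_K^\ast$, and the common criterion again yields at most one $x$. This handles every $a \not\equiv 0 \pmod N$, which is exactly the range $Z_x = N$.

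The only genuine obstacle is the bookkeeping in part (i): using the correct unit criterion in $\mathbb{Z}_{p_i^{e_i}}$ (divisibility by $p_i$, not by $p_i^{e_i}$) and pinning down that the order of the mod-$p_i$ reduction of $\alpha$ is exactly $t_i$, rather than the larger order of $\alpha$ in $\mathbb{Z}_{p_i^{e_i}}^\ast$. Part (ii) is essentially immediate once one observes that the smallness of $b$ relative to $p$ forces $b$ to be a unit of $\mathbb{Z}_K$.
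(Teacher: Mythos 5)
Your proposal is correct and takes essentially the same route as the paper's own proof: both reduce the problem to showing $\alpha^a-1\in\mathbb{Z}_K^\ast$ (in part (i) via the CRT factors and an order computation modulo each $p_i$; in part (ii) by splitting into $b=0$, which is impossible, and $b\neq 0$, where $|b|<p$ forces $b\in\mathbb{Z}_K^\ast$), after which injectivity of $x\mapsto\alpha^x$ on $\mathbb{Z}_N$ yields at most one solution. The only cosmetic difference is that you phrase the part (i) step as ``the reduction $\alpha\bmod p_i$ has order $\frac{p_i-1}{\gcd(p_i-1,R)}$,'' while the paper records the equivalent divisibility chain $p_i-1\nmid aR \Leftrightarrow \frac{p_i-1}{\gcd(p_i-1,R)}\nmid a$; both rely on the same implicit identification $\alpha=\pi^R$ from the preceding lemma.
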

	
	\begin{proof}
		 Recall the function 
		$$f:\,\,\,\mathbb{Z}_N \rightarrow \mathbb{Z}_K,\,\,\,\,\,f(x)=\alpha^x,$$
        where $\alpha$ is an element of order $N$. 
        
		(i)\,\,For $a\in\mathbb{Z}_N\setminus\{0\}$ and $b\in\mathbb{Z}_K$, consider a solution $x$ in $\mathbb{Z}_N$
		of equation
		\begin{equation*}
			f(x+a)-f(x)=b,
		\end{equation*}
		then we have
		\begin{equation}\label{equaf}
			\alpha^{x+a}-\alpha^{x}=\alpha^{x}(\alpha^{a}-1)=b,
		\end{equation}
		Note that $\alpha^{a}-1\in\mathbb{Z}_K^\ast$ means $\alpha^{a}-1$ is invertible in $\mathbb{Z}_K$. Furthermore, we can obtain the following equivalence relation :
		\begin{eqnarray}\label{equ}
			\alpha^{a}-1\in\mathbb{Z}_K^\ast&\Longleftrightarrow&\alpha^{a}-1\in\mathbb{Z}_{p_i^{e_i}}^\ast=\langle\pi_i\rangle \,\,\,\,\,\,(\mathrm{for\,all}\,i,\,1\leq i\leq s)\nonumber \\
			&\Longleftrightarrow&\alpha^{a}-1\not\equiv 0\,(\bmod\,p_i)\,\,\,\,\,\,(\mathrm{since}\, x\in\mathbb{Z}_{p_i^{e_i}}^\ast\Longleftrightarrow p_i\nmid x)\,\,\,\,\,\,(1\leq i\leq s)\nonumber \\
			&\Longleftrightarrow&\pi^{aR}\not\equiv 1\,(\bmod\,p_i)\,\,\,\,\,\,(1\leq i\leq s)\nonumber \\&\Longleftrightarrow& p_i-1\nmid aR\,\,\,\,\,\,(\mathrm{since\,the\,order\,of}\,\pi\,(\bmod\,p_i)\,\mathrm{is}\,(p_i-1))\,\,\,\,\,\,(1\leq i\leq s)\nonumber \\
			&\Longleftrightarrow&\frac{p_i-1}{\gcd(p_i-1,R)}\nmid a\,\,\,\,\,\,(1\leq i\leq s).
		\end{eqnarray}
		
		Therefore, from the equivalence relation we conclude that if $-Z_x<a<Z_x$ and $a\neq0$, where $Z_x={\text{min}}\{\frac{p_i-1}{\gcd(p_i-1,R)}:1\leq i\leq s\}$, then $\alpha^{a}-1\in\mathbb{Z}_K^\ast$ and (\ref{equaf})  can be transformed into  $\alpha^{x}=b(\alpha^{a}-1)^{-1}$. Hence, if $b(\alpha^{a}-1)^{-1}\in\langle\alpha\rangle$, where  $\langle\alpha\rangle=\{1,\alpha,\cdots,\alpha^{N-1}\}$ is a cyclic  subgroup of $\mathbb{Z}_K^\ast$ of order $N$, then (\ref{equaf}) must have a unique solution $x\in\mathbb{Z}_N$. If $b(\alpha^{a}-1)^{-1}\notin\langle\alpha\rangle$, then (\ref{equaf}) has no solution in $\mathbb{Z}_N$.
        
        (ii)\,\,For any integers $-N<a<N$ with $a\neq0$ and $-p<b<p$, consider the number of solutions to
		\begin{equation}\label{equa}
			f(x+a)-f(x)\equiv b\,(\bmod\,K)
		\end{equation}
		over $x\in\mathbb{Z}_N$, where $p$ denotes the smallest prime factor of $K$. 
		Next, we distinguish the two cases $b=0$ and $b\neq0$.
		
		If $b=0$, then we get $\alpha^{x}(\alpha^{a}-1)\equiv 0\,(\bmod\,K)$. Since $\alpha\in\mathbb{Z}_K^\ast$, we have $\alpha^x\in\mathbb{Z}_K^\ast$, and thus $\alpha^{a}-1\equiv 0\,(\bmod\,K)$, i.e., $\alpha^{a}\equiv 1\,(\bmod\,K)$. Recall that the order of $\alpha$ mod $K$ is $N$, so $N\,|\,a$, which contradicts to $-N<a<N$ with $a\neq0$. Therefore, the equation (\ref{equa}) has no solution in $\mathbb{Z}_N$.     
		
		If $b\neq0$, then we have $\alpha^{x}(\alpha^{a}-1)\equiv b\,(\bmod\,K)$. Because $p$ is the smallest prime factor of $K$, for $1\leq i\leq s$, all $p_i$ hold $p_i\geq p$ and $-p_i<b<p_i$. 
        By combining $b\neq0$, we get $p_i\nmid b$, that is, $b\not\equiv 0\,(\bmod\,p_i)$, which yields $b\in \mathbb{Z}_{p_i^{e_i}}^\ast$ from $ \mathbb{Z}_{p_i^{e_i}}^\ast= \mathbb{Z}_{p_i^{e_i}}\setminus p_i\mathbb{Z}_{p_i^{e_i}}$. 
        Applying the Chinese remainder theorem, we obtain that $b$ is invertible in $\mathbb{Z}_K$, i.e., $b\in \mathbb{Z}_{K}^\ast$. Note that $\alpha^x\in\mathbb{Z}_K^\ast$, which leads to $\alpha^{a}-1\equiv \alpha^{-x}b\,(\bmod\,K)$, and hence, it also implies $\alpha^a-1\in\mathbb{Z}_K^\ast$. Together with $\alpha^{x}=b(\alpha^{a}-1)^{-1}\in \mathbb{Z}_{K}^\ast$ and the order of $\alpha$ mod $K$ is $N$, we conclude that equation (\ref{equa}) admits a unique solution $x\in\mathbb{Z}_N$.
		
        Consequently, the equation (\ref{equa}) has at most one solution.
       This completes the proof of Theorem \ref{NewLPNF}.
	\end{proof}
    
     \begin{example}
		Let $K=5^2=25$, $N=20$, and we choose an element $\alpha=2$ of order $20$ to define the function
		$$f:\,\,\,\mathbb{Z}_{20} \rightarrow \mathbb{Z}_{25},\,\,\,\,\,f(x)=2^x.$$
		Then one can get
        \begin{table}[htp]
			\centering
			\begin{tabular}{ccccccccccccccccccccc}
				\Xhline{1pt}
				x  & 0& 1 & 2 & 3 & 4 & 5& 6&	7&	8&	9&	10&	11&	12&	13&	14&	15&	16&	17&	18&	19\\
				\Xhline{1.5pt}
				$f(x)$ &1&	2	&4&	8	&16	&7&	14&	3&	6&	12&	24&	23&	21&	17&	9&	18	&11&	22&	19	&13\\
				\Xhline{1pt}
			\end{tabular}
		\end{table}
        
		For $Z_x={\text{min}}\{\frac{p_i-1}{\gcd(p_i-1,R)}:1\leq i\leq s\}=4$, $Z_y=K=20$ and take $a\in(-\mathbb{Z}_{4},\mathbb{Z}_{4})\setminus\{0\}$, we have
		\begin{equation}\nonumber\label{lpnf}
			\left\{\begin{array}{ll}
            \{f(x-3)-f(x):0\leq x\leq19\}=\{21, 17, 9, 18, 11, 22, 19, 13, 1, 2, 4, 8, 16, 7, 14, 3, 6, 12, 24, 23\} \\
				\{f(x-2)-f(x):0\leq x\leq19\}=\{18, 11, 22, 19, 13, 1, 2, 4, 8, 16, 7, 14, 3, 6, 12, 24, 23, 21, 17, 9\} \\
				 \{f(x-1)-f(x):0\leq x\leq19\}=\{12, 24, 23, 21, 17, 9, 18, 11, 22, 19, 13, 1, 2, 4, 8, 16, 7, 14, 3, 6\}\\
				\{f(x+1)-f(x):0\leq x\leq19\}=\{1, 2, 4, 8, 16, 7, 14, 3, 6, 12, 24, 23, 21, 17, 9, 18, 11, 22, 19, 13\} \\
				\{f(x+2)-f(x):0\leq x\leq19\}=\{3, 6, 12, 24, 23, 21, 17, 9, 18, 11, 22, 19, 13, 1, 2, 4, 8, 16, 7, 14\} \\
				\{f(x+3)-f(x):0\leq x\leq19\}=\{7, 14, 3, 6, 12, 24, 23, 21, 17, 9, 18, 11, 22, 19, 13, 1, 2, 4, 8, 16\} 
					\end{array}\right.
		\end{equation}
		These indicate that $f(x)=2^x$ is a $\langle 20,25,4,20\rangle$-LPNF, which aligns with our Theorem \ref{NewLPNF} (i). 
	\end{example}
    \begin{example}
		Let $K=3^2=9$, $N=6$, and we choose an element $\alpha=2$ of order $6$ to define the function
		$$f:\,\,\,\mathbb{Z}_{6} \rightarrow \mathbb{Z}_{9},\,\,\,\,\,f(x)=2^x.$$
		Then one can get
        \begin{table}[htp]
			\centering
			\begin{tabular}{ccccccc}
				\Xhline{1pt}
				x  & 0& 1 & 2 & 3 & 4 & 5\\
				\Xhline{1.5pt}
				$f(x)$ &1&	2	&4&	8	&7	&5\\
				\Xhline{1pt}
			\end{tabular}
		\end{table}
		
        For $Z_x=N=6$, $Z_y=p=3$ and take $a\in(-\mathbb{Z}_{6},\mathbb{Z}_{6})\setminus\{0\}$, we have
		\begin{equation}\nonumber\label{lpnf}
			\left\{\begin{array}{ll}
				\{f(x-5)-f(x):0\leq x\leq5\}=\{f(x+1)-f(x):0\leq x\leq5\}=\{1, 2, 4, 8, 7, 5\} \\
				\{f(x-4)-f(x):0\leq x\leq5\}=\{f(x+2)-f(x):0\leq x\leq5\}=\{3, 6, 3, 6, 3, 6\} \\
				\{f(x-3)-f(x):0\leq x\leq5\}=\{f(x+3)-f(x):0\leq x\leq5\}=\{7, 5, 1, 2, 4, 8\}\\
                \{f(x-2)-f(x):0\leq x\leq5\}=\{f(x+4)-f(x):0\leq x\leq5\}=\{6, 3, 6, 3, 6, 3\} \\
               \{f(x-1)-f(x):0\leq x\leq5\}=\{f(x+5)-f(x):0\leq x\leq5\}=\{4, 8, 7, 5, 1, 2\} 
					\end{array}\right.
		\end{equation}
		We find that the solutions of $f(x+4)-f(x)=3$ are $1,3,5$. Moreover, $f(x-4)-f(x)=3$, $f(x-2)-f(x)=3$, and $f(x+2)-f(x)=3$ have more than one solution. These imply that $f(x)=2^x$ is a $\langle 6,9,6,3\rangle$-LPNF, which aligns with our Theorem \ref{NewLPNF} (ii). 

	\end{example}
	
	The next characterizes a construction of LPNFs with new parameters, which are employed to construct LAZ sequence sets.
   \begin{construction}\label{cons2}
        Let $K=p_1^{e_1}p_2^{e_2}\cdots p_s^{e_s}$, where $s\geq1$, the $p_i$ are distinct odd primes and $e_i\geq1$ with $1\leq i\leq s$. Denote $E=\textrm{lcm}\{\varphi(p_i^{e_i})=p_i^{e_i-1}(p_i-1):1\leq i\leq s\}=NR$ with $N\geq2$, $N,R\in\mathbb{Z}$. Put $K_1\geq K$, we will define the function $f:\,\mathbb{Z}_N \rightarrow \mathbb{Z}_{K_1}$ as 
        $f(x)=\langle\alpha^x\rangle_K$, 
        where $\alpha$ is an element of order $N$, $\langle\alpha\rangle_K$ is the least positive integer of $\alpha$ modulo $K$.
    \end{construction}
	\begin{theorem}\label{NewLPNF2}
		Let $f$ be a function defined in Construction \ref{cons2}.  
        Then we have $f$ 
is an $\langle N,K_1,Z_x,Z_y\rangle$-LPNF, where $Z_x={\text{min}}\{\frac{p_i-1}{\gcd(p_i-1,R)}:1\leq i\leq s\}$ and $Z_y=K_1-K+2$.
        
	\end{theorem}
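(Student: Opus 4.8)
The plan is to reduce the congruence over $\mathbb{Z}_{K_1}$ first to an honest integer equation and then to a congruence over $\mathbb{Z}_K$ that is already controlled by Theorem \ref{NewLPNF}(i). The engine behind this reduction is that $f$ takes values in a narrow window: since $\alpha\in\mathbb{Z}_K^\ast$, every power $\alpha^x$ is a unit modulo $K$, so its least positive residue $f(x)=\langle\alpha^x\rangle_K$ lies in $\{1,2,\ldots,K-1\}$ and in particular is never $0$. I would record this at the outset, since the precise width of the value range is what drives the whole argument.

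First I would fix $a$ with $0<|a|<Z_x$ and $b$ with $|b|<Z_y=K_1-K+2$, and suppose $x\in\mathbb{Z}_N$ satisfies $f(x+a)-f(x)\equiv b\pmod{K_1}$. Writing $d=f(x+a)-f(x)$ as an ordinary integer, the window bound gives $|d|\le (K-1)-1=K-2$, while $|b|\le Z_y-1=K_1-K+1$. Hence $|d-b|\le (K-2)+(K_1-K+1)=K_1-1<K_1$. Since $d-b$ is a multiple of $K_1$ with absolute value strictly below $K_1$, it must vanish, so $d=b$ as integers. This is the decisive step, and the choice $Z_y=K_1-K+2$ is exactly what makes the range $[-(K-2),K-2]$ of $d$ together with the range of $b$ fit inside a single period of length $K_1$; any larger $Z_y$ would admit a second integer value of $d$ congruent to $b$ modulo $K_1$ and break the collapse.

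Next I would pass from the integer identity $f(x+a)-f(x)=b$ to its reduction modulo $K$. Because $f(x)\equiv\alpha^x\pmod{K}$ by definition of the least positive residue, this yields $\alpha^{x+a}-\alpha^{x}\equiv b\pmod K$. The map $x\mapsto\alpha^x\bmod K$ is precisely the function treated in Theorem \ref{NewLPNF}(i), which asserts it is an $\langle N,K,Z_x,K\rangle$-LPNF with the same $Z_x=\min\{(p_i-1)/\gcd(p_i-1,R):1\le i\le s\}$. Since the second parameter there is $Z_y=K$, every residue class is admissible as a right-hand side, so for our fixed $a$ the equation $\alpha^{x+a}-\alpha^{x}\equiv b\pmod K$ has at most one solution $x\in\mathbb{Z}_N$.

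Finally I would assemble the chain of implications: any solution of $f(x+a)-f(x)\equiv b\pmod{K_1}$ forces $d=b$ as integers (the collapse step), which forces $\alpha^{x+a}-\alpha^{x}\equiv b\pmod K$ (the reduction step); as the latter has at most one solution, so does the former. This shows $f$ is an $\langle N,K_1,Z_x,Z_y\rangle$-LPNF with $Z_y=K_1-K+2$, as claimed. I expect the only genuine obstacle to be executing the collapse step carefully — in particular keeping track of the fact that $f$ never attains $0$, so that $|d|\le K-2$ rather than $K-1$, which is exactly what keeps $|d-b|$ strictly below $K_1$; everything else is a clean invocation of Theorem \ref{NewLPNF}(i).
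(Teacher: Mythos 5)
Your proposal is correct and follows essentially the same route as the paper: the narrow value range $f(x)\in\{1,\ldots,K-1\}$ collapses the congruence modulo $K_1$ to an exact integer equation (since $|d-b|\le K_1-1$), which is then reduced modulo $K$ and handled by Theorem \ref{NewLPNF}(i). The only cosmetic difference is that you invoke Theorem \ref{NewLPNF}(i) as a black box (using $Z_y=K$ to cover all residues), whereas the paper re-extracts the underlying fact $\alpha^{a}-1\in\mathbb{Z}_K^\ast$; the substance is identical.
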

	\begin{proof}
		By the definition of LPNFs, it suffices to consider the equation 
        \begin{equation}\label{equa2}
        f(x+a)-f(x)=\langle\alpha^{x+a}\rangle_K-\langle\alpha^{x}\rangle_K\equiv b\,(\bmod\,K),
        \end{equation}
        which has at most one solution $x\in\mathbb{Z}_N$ when $-Z_x<a<Z_x$ with $a\neq0$ and $-Z_y<b<Z_y$.\\
        Since $1\leq f(x),f(x+a)\leq K-1$, we must have $$-(K-2)\leq f(x+a)-f(x)\leq K-2.$$ Thus, from $-(K_1-K+2)< b< K_1-K+2$ yields $$-(K_1-1)\leq f(x+a)-f(x)-b\leq K_1-1.$$
        Furthermore, the equation (\ref{equa2}) implies $K_1\,|\,(f(x+a)-f(x)-b)$, and so, it denotes the conventional arithmetic operations applied to integers $\mathbb{Z}$, i.e.,
		\begin{equation}\label{Z_k}
			f(x+a)-f(x)-b=\langle\alpha^{x+a}\rangle_K-\langle\alpha^{x}\rangle_K-b=0.
		\end{equation}
        Consequently, the equation (\ref{Z_k}) is valid in $\mathbb{Z}_K$, that is,
        $$f(x+a)-f(x)=\langle\alpha^{x+a}\rangle_K-\langle\alpha^{x}\rangle_K\equiv b\,(\bmod\,K).$$
        But $f(x)=\langle\alpha^{x}\rangle_K\equiv \alpha^{x}\,(\rm mod\,K)$,
         so that we can get 
         \begin{equation}\label{equa22}
           \alpha^{x+a}-\alpha^{x}\equiv b\,(\bmod\,K).  
         \end{equation}
        For $-{\min}\{\frac{p_i-1}{\gcd(p_i-1,R)}:1\leq i\leq s\}<a<{\min}\{\frac{p_i-1}{\gcd(p_i-1,R)}:1\leq i\leq s\}$ with $a\neq 0$, applying Theorem \ref{NewLPNF}, we can immediately conclude that $\alpha^{a}-1\in\mathbb{Z}_K^\ast$. Hence, equation (\ref{equa22}) has at most a solution in $\mathbb{Z}_N$ if $-{\min}\{\frac{p_i-1}{\gcd(p_i-1,R)}:1\leq i\leq s\}<a<{\min}\{\frac{p_i-1}{\gcd(p_i-1,R)}:1\leq i\leq s\}$ and $-(K_1-K+2)<b<(K_1-K+2)$.

		Consequently, the equation (\ref{equa2}) admits at most one solution.
	\end{proof}
   
     \begin{example}
		Let $K=7$, $K_1=10$, $N=6$, and we choose an element $\alpha=5$ of order $6$ to define the function
		$$f:\,\,\,\mathbb{Z}_{6} \rightarrow \mathbb{Z}_{10},\,\,\,\,\,f(x)=\langle5^x\rangle_7.$$
		Then one can get
        \begin{table}[h]
			\centering
			\begin{tabular}{ccccccc}
				\Xhline{1pt}
				x  & 0& 1 & 2 & 3 & 4 & 5\\
				\Xhline{1.5pt}
				$f(x)$ &1&	5	&4&	6	&2	&3\\
				\Xhline{1pt}
			\end{tabular}
		\end{table}
		
        For $Z_x={\text{min}}\{\frac{p_i-1}{\gcd(p_i-1,R)}:1\leq i\leq s\}=6$, $Z_y=K_1-K+2=5$ and take $a\in(-\mathbb{Z}_{6},\mathbb{Z}_{6})\setminus\{0\}$, we have
		\begin{equation}\nonumber\label{lpnf}
			\left\{\begin{array}{ll}
				\{f(x-5)-f(x):0\leq x\leq5\}=\{f(x+1)-f(x):0\leq x\leq5\}=\{4, 9, 2, 6, 1, 8\} \\
				 \{f(x-4)-f(x):0\leq x\leq5\}=\{f(x+2)-f(x):0\leq x\leq5\}=\{3, 1, 8, 7, 9, 2\} \\
				\{f(x-3)-f(x):0\leq x\leq5\}=\{f(x+3)-f(x):0\leq x\leq5\}=\{5, 7, 9, 5, 3, 1\}\\
                \{f(x-2)-f(x):0\leq x\leq5\}=\{f(x+4)-f(x):0\leq x\leq5\}=\{1, 8, 7, 9, 2, 3\} \\
                \{f(x-1)-f(x):0\leq x\leq5\}=\{f(x+5)-f(x):0\leq x\leq5\}=\{2, 6, 1, 8, 4, 9\} 
					\end{array}\right.
		\end{equation}
		These show that the solutions of $f(x+3)-f(x)=5$ and $f(x-3)-f(x)=5$ are $0,3$. we conclude that $f(x)=2^x$ is a $\langle 6,10,6,5\rangle$-LPNF, which aligns with our Theorem \ref{NewLPNF2}. 

	\end{example}
		\section{Proposed LAZ sequence sets}
	By applying the LPNFs in Theorems \ref{NewLPNF} and \ref{NewLPNF2} to the above Lemma \ref{2025sset}, we directly get the following LAZ sequence sets. Meanwhile, we also analyze optimality of these LAZ sequence sets.
	
	\begin{theorem}\label{Newsset}
		Let $K=p_1^{e_1}p_2^{e_2}\cdots p_s^{e_s}$, where $s\geq1$, $\{p_1,p_2,\cdots,p_s\}$ be distinct odd primes and $e_i\geq1$ with $1\leq i\leq s$, $E=\textrm{lcm}\{\varphi(p_i^{e_i}):1\leq i\leq s\}=NR$ with $N\geq2$. Then: 
        
        (i)\,\,A periodic LAZ sequence set $\mathcal{S}$ with parameters $(M,D,\Pi,K)$ from Theorem \ref{NewLPNF} (i), where $M=N$ is the number of sequences, $D=NK$ is the length of the sequences, $\Pi=(-Z_x,Z_x)\times(-Z_y,Z_y)$, $Z_x=\min\{\frac{p_i-1}{\gcd(p_i-1,R)}:1\leq i\leq s\}$, and $Z_y=K$.

        (ii)\,\,A periodic LAZ sequence set $\mathcal{S}$ with parameters $(N,NK,\Pi,K)$ from Theorem \ref{NewLPNF} (ii), where $\Pi=(-Z_x,Z_x)\times(-Z_y,Z_y)$, $Z_x=N$, $Z_y=p$, and $p$ is the smallest prime factor of $K$.
	\end{theorem}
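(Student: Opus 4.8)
The plan is to derive Theorem~\ref{Newsset} as a direct corollary of the interleaving construction recorded in Lemma~\ref{2025sset}, fed by the two LPNFs produced in Theorem~\ref{NewLPNF}. Since Lemma~\ref{2025sset} converts any $\langle N,K,Z_x,Z_y\rangle$-LPNF into a periodic LAZ sequence set with $M=N$, $D=NK$, $\Pi=(-Z_x,Z_x)\times(-Z_y,Z_y)$ and $\theta_{\max}=K$, the whole argument reduces to checking that the functions supplied by Theorem~\ref{NewLPNF} satisfy the LPNF hypothesis and then reading off the resulting parameters.

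For part (i), the first step is to invoke Theorem~\ref{NewLPNF}(i), which guarantees that $f(x)=\alpha^x$ is an $\langle N,K,Z_x,Z_y\rangle$-LPNF with $Z_x=\min\{(p_i-1)/\gcd(p_i-1,R):1\le i\le s\}$ and $Z_y=K$. Feeding this LPNF into Lemma~\ref{2025sset} immediately yields a periodic LAZ sequence set with set size $M=N$, length $D=NK$, low ambiguity zone $\Pi=(-Z_x,Z_x)\times(-Z_y,Z_y)$, and maximum periodic AF magnitude $\theta_{\max}=K$. Matching these quantities against the claimed parameters $(M,D,\Pi,K)$ completes this part; no further computation is needed.

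For part (ii), I would proceed identically, but starting from Theorem~\ref{NewLPNF}(ii): the same function $f(x)=\alpha^x$ is an $\langle N,K,Z_x,Z_y\rangle$-LPNF with $Z_x=N$ and $Z_y=p$, the smallest prime factor of $K$. Applying Lemma~\ref{2025sset} once more produces the periodic LAZ sequence set with parameters $(N,NK,\Pi,K)$, where $\Pi=(-N,N)\times(-p,p)$, exactly as asserted.

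There is essentially no technical obstacle here, since the statement is a composition of two results that have already been established: the nonlinearity analysis lives entirely in Theorem~\ref{NewLPNF}, and the sequence construction together with its periodic AF bound live entirely in Lemma~\ref{2025sset}. The only point requiring care is bookkeeping—confirming that the values of $Z_x$, $Z_y$, $D=NK$, $M=N$, and $\theta_{\max}=K$ supplied through the LPNF agree with the parameters declared in the theorem, and that the hypotheses on $K$, $N$, $R$ (distinct odd prime powers, $N\ge 2$, $E=NR$) are carried over unchanged from Theorem~\ref{NewLPNF} so that Lemma~\ref{2025sset} is genuinely applicable.
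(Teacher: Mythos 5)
Your proposal is correct and matches the paper's own proof, which likewise obtains both parts by feeding the LPNFs of Theorem~\ref{NewLPNF}(i) and (ii) directly into the interleaving construction of Lemma~\ref{2025sset} and reading off the parameters $(N,NK,\Pi,K)$. The only difference is that you spell out the bookkeeping explicitly, whereas the paper states it in one line.
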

    \begin{proof}
The results follow directly by applying the LPNFs given in Theorem~\ref{NewLPNF} to Lemma~\ref{2025sset}, with the corresponding parameters $(M, D, \Pi, K)$ and $(N,NK,\Pi,K)$ specified in Theorem~\ref{Newsset} (i) and (ii).
\end{proof}



    Next, we present a specific LAZ sequence set in Theorem \ref{Newsset}.
     \begin{corollary}\label{cor-p-ssetp66}
		Let $K=p^{e}$, where $p$ is an odd prime, $e\geq1$, and $N=\varphi(p^{e})=p^{e-1}(p-1)$. Let $\mathcal{S}$ denote a periodic LAZ sequence set from Theorem \ref{Newsset} above. 
        
        (i)\,\,The $\mathcal{S}$ is an $(N,NK,\Pi,K)$ with $\Pi=(-Z_x,Z_x)\times(-Z_y,Z_y)$, $Z_y=K=p^{e}$, $Z_x=p-1$. 

        (ii)\,\,The $\mathcal{S}$ is an $(N,NK,\Pi,K)$ with $\Pi=(-Z_x,Z_x)\times(-Z_y,Z_y)$, $Z_y=p$, $Z_x=N=p^{e-1}(p-1)$. 
		
	\end{corollary}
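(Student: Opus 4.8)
The plan is to derive both parts as the single-prime-power specialization ($s=1$, $K=p^{e}$) of Theorem~\ref{Newsset}, so that essentially no new work beyond a parameter substitution is required. First I would record that with a single prime one has $E=\varphi(p^{e})=p^{e-1}(p-1)$; since the corollary sets $N=\varphi(p^{e})$, the factorization $E=NR$ forces $R=E/N=1$. The preceding lemma guarantees an element $\alpha\in\mathbb{Z}_K^\ast$ of order $N$, and in this case $R=1$ means one may take $\alpha=\pi$ to be a primitive root modulo $p^{e}$, using that $\mathbb{Z}_{p^{e}}^\ast$ is cyclic of order $\varphi(p^{e})=N$ for odd $p$. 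This supplies the $\alpha$ needed to instantiate Theorem~\ref{Newsset}.

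For part (i) I would apply Theorem~\ref{Newsset}~(i) verbatim, which yields an $(N,NK,\Pi,K)$ periodic LAZ sequence set with $Z_y=K$ and $Z_x=\min\{\frac{p_i-1}{\gcd(p_i-1,R)}:1\le i\le s\}$. With the single prime $p$ and $R=1$, the minimum has one term and collapses to $Z_x=\frac{p-1}{\gcd(p-1,1)}=p-1$, since $\gcd(p-1,1)=1$; together with $Z_y=K=p^{e}$ this is exactly the claimed parameter set. For part (ii) I would apply Theorem~\ref{Newsset}~(ii), which gives $Z_x=N$ and $Z_y$ equal to the smallest prime factor of $K$. Because $K=p^{e}$ has $p$ as its only prime divisor, this smallest prime factor is $p$, so $Z_y=p$ and $Z_x=N=p^{e-1}(p-1)$, matching the statement.

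I do not anticipate any genuine obstacle: the entire content lives in the general Theorem~\ref{Newsset} (and beneath it in the LPNF construction of Theorem~\ref{NewLPNF}), and the corollary merely reads off the resulting numbers when $K$ is a prime power. The only nontrivial-looking step, the evaluation of $Z_x$ in part (i), reduces to the observation that $\gcd(p-1,1)=1$ once $R=1$. The main thing I would be careful to state explicitly is \emph{why} $R=1$ here, since this single fact is what simultaneously simplifies $Z_x$ in (i) and is forced by the choice $N=\varphi(p^{e})$ as the full order of $\alpha$.
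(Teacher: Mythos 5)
Your proposal is correct and matches the paper's own route: the paper treats this corollary as an immediate specialization of Theorem~\ref{Newsset} (its later optimality proof explicitly says the parameters follow ``by taking $s=R=1$''), which is precisely your argument. Your explicit justification that $N=\varphi(p^{e})$ forces $R=1$, and hence $Z_x=\frac{p-1}{\gcd(p-1,1)}=p-1$ in part (i) and $Z_y=p$ in part (ii), is exactly the substitution the paper leaves implicit.
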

    Next, we discuss the optimality of the periodic LAZ sequence set presented in Corollary \ref{cor-p-ssetp66}.
 \begin{theorem}\label{optimalc1}
     Let $\mathcal{S}$ denote a periodic LAZ sequence set from Corollary \ref{cor-p-ssetp66} above. Then
     
     (i)\,\,the $\mathcal{S}$ presented in Corollary \ref{cor-p-ssetp66} (i) is asymptotically optimal with regard to the YZFLLT bound in Lemma \ref{2013bound} (i);
     
     (ii)\,\,the $\mathcal{S}$ presented in Corollary \ref{cor-p-ssetp66} (ii) is asymptotically optimal with regard to the YZFLLT bound in Lemma \ref{2013bound} (i).
    \end{theorem}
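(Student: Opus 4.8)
The plan is to evaluate the optimality factor $\rho_{\text{LAZ}} = \theta_{\max}/\Delta$ in closed form for each family and verify that it tends to $1$ as the underlying prime $p \to \infty$ (which forces $D = NK \to \infty$), where $\Delta$ is the periodic YZFLLT bound of Lemma~\ref{2013bound}(i). In both parts the common data are $M = N = p^{e-1}(p-1)$, $D = NK = p^{2e-1}(p-1)$, and $\theta_{\max} = K = p^e$; only the window $(Z_x, Z_y)$ changes. I would work with $\rho_{\text{LAZ}}^2$ throughout, since $\rho_{\text{LAZ}} \to 1$ is equivalent to $\rho_{\text{LAZ}}^2 \to 1$.

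The observation that unifies both parts is that the two windows have the \emph{same} area: in part~(i), $Z_x Z_y = (p-1)K$, and in part~(ii), $Z_x Z_y = N\cdot p = (p-1)K$ as well. Consequently, in each case
\[
MZ_x Z_y = N(p-1)K = (p-1)D, \qquad\text{so}\qquad MZ_x Z_y - D = (p-2)D.
\]
This is the key cancellation: the factor $D$ in the numerator of the radicand of $\Delta = \frac{D}{\sqrt{Z_y}}\sqrt{\frac{MZ_xZ_y-D}{D(MZ_x-1)}}$ cancels against the denominator, leaving $\Delta = \frac{D}{\sqrt{Z_y}}\sqrt{\frac{p-2}{MZ_x-1}}$.

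It then remains to substitute the two choices of $(Z_x, Z_y)$ and simplify. For part~(i), with $Z_x = p-1$, $Z_y = p^e$, and $MZ_x - 1 = p^{e-1}(p-1)^2 - 1$, squaring and collecting powers of $p$ gives
\[
\rho_{\text{LAZ}}^2 = \frac{p(p-1)^2 - p^{2-e}}{(p-1)^2(p-2)} = \frac{p}{p-2} - \frac{p^{2-e}}{(p-1)^2(p-2)},
\]
while for part~(ii), with $Z_x = N$, $Z_y = p$, and $MZ_x - 1 = p^{2e-2}(p-1)^2 - 1$, the parallel computation gives
\[
\rho_{\text{LAZ}}^2 = \frac{p(p-1)^2 - p^{3-2e}}{(p-1)^2(p-2)}.
\]
In both, the correction term is $o(1)$ as $p \to \infty$ (using $e \geq 1$, so $p^{2-e}\leq p$ and $p^{3-2e}\leq p$), whereas the leading ratio $p/(p-2) \to 1$; hence $\rho_{\text{LAZ}} \to 1$ in each case.

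The algebra is routine, so the real content is twofold. First, spotting the identity $MZ_x Z_y - D = (p-2)D$, which collapses the nested radical to an elementary expression; this is driven by the coincidence that both windows have area $(p-1)K$. Second, recognizing that the operative asymptotic regime is $p \to \infty$ rather than $e \to \infty$: for a fixed prime $p$ with $e \to \infty$ the very same formulas yield $\rho_{\text{LAZ}}^2 \to p/(p-2) > 1$, so asymptotic optimality holds precisely along the direction in which the smallest prime grows. With these two points settled, both parts reduce to the elementary limit $\lim_{p\to\infty} p/(p-2) = 1$.
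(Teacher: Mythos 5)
Your proof is correct and follows essentially the same route as the paper's: direct substitution of the parameters $M=N=p^{e-1}(p-1)$, $D=NK$, $\theta_{\max}=K=p^e$ into the periodic YZFLLT bound, exploiting the fact that both windows give the same product $MZ_xZ_y=(p-1)D$ (the paper uses this implicitly when it notes that $\Delta$ in part (ii) equals the expression from part (i)), and then taking the limit $p\to\infty$ so that $\theta_{\max}/\Delta \to \sqrt{K/N}\to 1$. Your version is a slightly cleaner packaging — the exact closed form for $\rho_{\text{LAZ}}^2$ and the explicit remark that the optimality is asymptotic in $p$ (not in $e$) are nice touches absent from the paper — but the underlying argument is the same.
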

    \begin{proof}
		The parameters $M$, $D$, $Z_x$, $Z_y$, $\theta_{\text{max}}$ and $\hat{\theta}_{\text{max}}$ can be seen directly from Theorem \ref{2025sset} and \ref{Newsset} by taking $s=R=1$. By Theorem \ref{2013bound}, the YZFLLT bound $\Delta$ can be
		derived as follows:
        \begin{eqnarray}
			\Delta&=&\frac{D}{\sqrt{Z_y}}\sqrt{\frac{MZ_xZ_y-D}{D(MZ_x-1)}}\nonumber \\&=&\sqrt{\frac{D^2MZ_xZ_y-D^3}{DZ_y(MZ_x-1)}} \nonumber \\ 
            &=&\sqrt{\frac{D(MZ_xZ_y-D)}{Z_yMZ_x-Z_y}} \nonumber \\
			&=&\sqrt{\frac{D(1-\frac{D}{MZ_xZ_y})}{1-\frac{1}{MZ_xZ_y}}}.\label{bound1c}
		\end{eqnarray}
        
        (i)\,\,Recall $M=N$, $D=NK$, $Z_y=K=p^{e}$, $Z_x=p-1$, then (\ref{bound1c}) yields
        \begin{eqnarray}
			\Delta
            &=&\sqrt{\frac{D(1-\frac{1}{p-1})}{1-\frac{1}{p^ep^{e-1}(p-1)^2}}}\label{bound1ic}.
		\end{eqnarray}
          
        Consider that if $p\rightarrow\infty$, then $\frac{K}{N}=\frac{p^e}{p^{e-1}(p-1)}=\frac{1}{1-\frac{1}{p}}\rightarrow1$. Together with (\ref{bound1ic}) and $\theta_{\text{max}}=K$, hence we infer 
        $$\lim\limits_{p \to \infty}\frac{\theta_{\text{max}}}{\Delta}=\lim\limits_{p \to \infty}\sqrt{\frac{K^2}{D}}=\lim\limits_{p \to \infty}\sqrt{\frac{K}{N}}=1.$$   
        
        (ii)\,\,Recall $M=N$, $D=NK$, $Z_y=p$, $Z_x=N=p^{e-1}(p-1)$, it is easily checked that $\Delta$ is actually equal to (\ref{bound1ic}),
         and so 
        $$\lim\limits_{p \to \infty}\frac{\theta_{\text{max}}}{\Delta}=\lim\limits_{p \to \infty}\sqrt{\frac{K^2}{D}}=\lim\limits_{p \to \infty}\sqrt{\frac{K}{N}}=1.$$
        The proof of Theorem \ref{optimalc1} is completed.
	\end{proof}
    	Similarly, we will present some other common scenarios of $K$ in the Remark, i.e. $K$ is the product of twin primes or distinct odd primes.
    \begin{remark}
		Case A: When $K=p(p+2)$, where $p$ and $p+2$ are odd prime, and suppose $N=\textrm{lcm}(p-1,p+1)=\frac{p^2-1}{2}$. 
        
        (i)\,\,There exist a periodic LAZ sequence sets with parameters $(N,NK,\Pi,K)$, where $\Pi=(-Z_x,Z_x)\times(-Z_y,Z_y)$, $Z_y=K=p^{e}$, $Z_x=p-1$. Moreover, $\lim\limits_{D \to \infty} \rho_{\text{LAZ}}=\sqrt{2}$, where $\rho_{\text{LAZ}}=\frac{\theta_{\text{max}}}{\Delta}$.
        
        (ii)\,\,There exist a periodic LAZ sequence sets with parameters $(N,NK,\Pi,K)$, where $\Pi=(-Z_x,Z_x)\times(-Z_y,Z_y)$, $Z_y=p$, $Z_x=N=p^{e-1}(p-1)$. Moreover, $\lim\limits_{D \to \infty} \rho_{\text{LAZ}}=\sqrt{2}$, where $\rho_{\text{LAZ}}=\frac{\theta_{\text{max}}}{\Delta}$.

        Case B: When $K=p_1p_2\cdots p_s$, where $p_1<p_2<\cdots<p_s$ are odd primes, and suppose $N=E=\textrm{lcm}\{\varphi(p_i)=p_i-1:1\leq i\leq s\}=\frac{(p_1-1)(p_2-1)\cdots(p_s-1)}{2}$. 
        
        (i)\,\,There exist a periodic LAZ sequence set with parameters $(N,NK,\Pi,K)$, where $\Pi=(-Z_x,Z_x)\times(-Z_y,Z_y)$, $Z_y=K=p_1p_2\cdots p_s$, $Z_x=p_1-1$. Moreover, $\lim\limits_{D \to \infty} \rho_{\text{LAZ}}=\sqrt{2}$ and $\lim\limits_{D \to \infty} \hat{\rho}_{\text{LAZ}}=\sqrt{2}$, where $\rho_{\text{LAZ}}=\frac{\theta_{\text{max}}}{\Delta}$ and $\hat{\rho}_{\text{LAZ}}=\frac{\hat{\theta}_{\text{max}}}{\hat{\Delta}}$.
        
        (ii)\,\,There exist a periodic LAZ sequence set with parameters $(N,NK,\Pi,K)$, where $\Pi=(-Z_x,Z_x)\times(-Z_y,Z_y)$, $Z_y=p_1$, $Z_x=N$. Moreover, $\lim\limits_{D \to \infty} \rho_{\text{LAZ}}=\sqrt{2}$, where $\rho_{\text{LAZ}}=\frac{\theta_{\text{max}}}{\Delta}$.
	\end{remark}
    The following Theorem presents the aperiodic LAZ set obtained via Theorem \ref{NewLPNF} and Lemma~\ref{2025sset}.
    \begin{theorem}\label{Newssetap}
		Let $K=p_1^{e_1}p_2^{e_2}\cdots p_s^{e_s}$, where $s\geq1$, $\{p_1,p_2,\cdots,p_s\}$ be distinct odd primes and $e_i\geq1$ with $1\leq i\leq s$, $E=\textrm{lcm}\{\varphi(p_i^{e_i}):1\leq i\leq s\}=NR$ with $N\geq2$. We obtain: 
        
        (i)\,\,An aperiodic LAZ sequence set $\mathcal{S}$ with parameters $(N,NK,\Pi,K+Z_x-1)$ from Theorem \ref{NewLPNF} (i), where $\Pi=(-Z_x,Z_x)\times(-Z_y,Z_y)$, $Z_x=\min\{\frac{p_i-1}{\gcd(p_i-1,R)}:1\leq i\leq s\}$, $Z_y=K$.

        (ii)\,\,An aperiodic LAZ sequence set $\mathcal{S}$ with parameters $(N,NK,\Pi,K+Z_x-1)$ from Theorem \ref{NewLPNF} (ii), where $\Pi=(-Z_x,Z_x)\times(-Z_y,Z_y)$, $Z_x=N$, $Z_y=p$, $p$ is the smallest prime factor of $K$.
	\end{theorem}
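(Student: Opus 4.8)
The plan is to derive both statements as immediate applications of Lemma~\ref{2025sset} to the two locally perfect nonlinear functions already established in Theorem~\ref{NewLPNF}, now invoking the aperiodic branch of that construction rather than the periodic one used in Theorem~\ref{Newsset}. Since Theorem~\ref{NewLPNF} certifies that $f(x)=\alpha^x$ is a genuine $\langle N,K,Z_x,Z_y\rangle$-LPNF in each of its two cases, no new combinatorial collision property of $f$ needs to be verified here; the whole content of the theorem is a parameter substitution into the aperiodic output of Lemma~\ref{2025sset}.

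First I would recall that Lemma~\ref{2025sset} takes any $\langle N,K,Z_x,Z_y\rangle$-LPNF and produces, via the interleaving technique, an aperiodic LAZ sequence set with set size $M=N$, length $D=NK$, ambiguity zone $\Pi=(-Z_x,Z_x)\times(-Z_y,Z_y)$, and maximum aperiodic AF magnitude $\hat{\theta}_{\max}=K+Z_x-1$. The key observation is that this value of $\hat{\theta}_{\max}$ is expressed abstractly in terms of $K$ and $Z_x$, so it automatically specializes once the concrete $Z_x$ for each case is inserted.

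For part (i) I would feed in the LPNF of Theorem~\ref{NewLPNF}(i), for which $Z_x=\min\{\frac{p_i-1}{\gcd(p_i-1,R)}:1\le i\le s\}$ and $Z_y=K$; Lemma~\ref{2025sset} then yields an aperiodic set with parameters $(N,NK,\Pi,K+Z_x-1)$ and exactly the claimed $\Pi$. For part (ii) I would instead use Theorem~\ref{NewLPNF}(ii), where $Z_x=N$ and $Z_y=p$ with $p$ the smallest prime factor of $K$, giving the aperiodic set $(N,NK,\Pi,K+N-1)$, again with the stated ambiguity zone. Both conclusions follow directly with no further computation.

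There is essentially no substantive obstacle: the work of showing that $f(x)=\alpha^x$ satisfies the defining collision bound of an LPNF was already completed in the proof of Theorem~\ref{NewLPNF}, and Lemma~\ref{2025sset} provides a ready-made, parameter-preserving passage from LPNFs to aperiodic LAZ sequence sets. The only point requiring a moment's care is to confirm that one is reading off the aperiodic magnitude $\hat{\theta}_{\max}=K+Z_x-1$ rather than the periodic $\theta_{\max}=K$; once the correct branch of Lemma~\ref{2025sset} is selected, the two displayed parameter tuples are immediate.
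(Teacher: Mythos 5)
Your proposal is correct and follows exactly the paper's own argument: both derive the theorem as an immediate application of Lemma~\ref{2025sset} to the two LPNFs established in Theorem~\ref{NewLPNF}, reading off the aperiodic parameter $\hat{\theta}_{\max}=K+Z_x-1$ in place of the periodic $\theta_{\max}=K$. No difference in substance; the paper's proof is equally brief.
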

\begin{proof}
The results follow directly by applying the LPNFs given in Theorem~\ref{NewLPNF} to Lemma~\ref{2025sset}, with the corresponding parameters $(M, D, \Pi, K+Z_x-1)$ and $(N,NK,\Pi,K+Z_x-1)$ specified in Theorem~\ref{Newssetap} (i) and (ii).
\end{proof}
   
	    \begin{corollary}\label{cor-p-ssetap66}
		Let $K=p^{e}$, where $p$ is an odd prime, $e\geq1$, and $N=\varphi(p^{e})=p^{e-1}(p-1)$. Let $\mathcal{S}$ denote an aperiod LAZ sequence set from Theorem \ref{Newssetap} above. Then
        
        (i)\,\, the $\mathcal{S}$ is an $(N,NK,\Pi,K+p-2)$ with $\Pi=(-Z_x,Z_x)\times(-Z_y,Z_y)$, $Z_y=K=p^{e}$, $Z_x=p-1$. 

        (ii)\,\,the $\mathcal{S}$ is an $(N,NK,\Pi,K+N-1)$ with $\Pi=(-Z_x,Z_x)\times(-Z_y,Z_y)$, $Z_y=p$, $Z_x=N=p^{e-1}(p-1)$. 
		
	\end{corollary}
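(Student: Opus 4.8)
The plan is to obtain Corollary~\ref{cor-p-ssetap66} as the single-prime specialization ($s=1$, $K=p^e$) of Theorem~\ref{Newssetap}, so that essentially all the work reduces to verifying that the parameter formulas collapse to the stated values. First I would record the structural simplification: with only one prime factor, $E=\textrm{lcm}\{\varphi(p^e)\}=\varphi(p^e)=N$, and since by hypothesis $E=NR$, this forces $R=1$. This single observation drives both parts, because $R=1$ trivializes the $\gcd$ appearing in the definition of $Z_x$ in Theorem~\ref{NewLPNF}(i).

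For part (i), I would substitute $s=1$ and $R=1$ into Theorem~\ref{Newssetap}(i). The expression $Z_x=\min\{(p_i-1)/\gcd(p_i-1,R):1\leq i\leq s\}$ reduces to $(p-1)/\gcd(p-1,1)=p-1$, while $Z_y=K=p^e$ is inherited unchanged, and the set size $M=N$ and length $D=NK$ are as before. It then remains only to evaluate the aperiodic sidelobe parameter $\hat{\theta}_{\max}=K+Z_x-1$ supplied by Lemma~\ref{2025sset}; plugging in $Z_x=p-1$ gives $\hat{\theta}_{\max}=K+(p-1)-1=K+p-2$, which is exactly the fourth coordinate claimed.

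For part (ii), I would instead invoke Theorem~\ref{Newssetap}(ii), where the LPNF of Theorem~\ref{NewLPNF}(ii) supplies $Z_x=N$ and $Z_y=p$ with $p$ the smallest prime factor of $K$. Since $K=p^e$, the smallest prime factor is $p$ itself, so $Z_y=p$ and $Z_x=N=p^{e-1}(p-1)$. Evaluating $\hat{\theta}_{\max}=K+Z_x-1=K+N-1$ then yields precisely the stated parameters $(N,NK,\Pi,K+N-1)$.

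There is no genuine obstacle here beyond bookkeeping: the substantive content was already established in Theorem~\ref{NewLPNF} (the LPNF property) and in Lemma~\ref{2025sset} (the translation from an LPNF to a sequence set, including the aperiodic sidelobe bound $K+Z_x-1$). The only point requiring a moment's care is confirming $R=1$, since an incorrect value of $R$ would alter the $\gcd$ and hence $Z_x$ in part (i); once $R=1$ is pinned down, both parts follow by direct substitution into Theorem~\ref{Newssetap}.
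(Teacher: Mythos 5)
Your proposal is correct and matches the paper's own (largely implicit) derivation: the corollary is obtained exactly by specializing Theorem~\ref{Newssetap} to $s=1$, $K=p^e$, where the choice $N=\varphi(p^e)=E$ forces $R=1$, collapsing $Z_x$ to $p-1$ in part (i) and giving $Z_x=N$, $Z_y=p$ in part (ii), with $\hat{\theta}_{\max}=K+Z_x-1$ from Lemma~\ref{2025sset}. Your explicit attention to why $R=1$ is pinned down is a point the paper glosses over but is the same substitution it performs (cf.\ the phrase ``by taking $s=R=1$'' in the proof of Theorem~\ref{optimalc2}).
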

    
    The next Theorem analyzes the optimality of the aperiodic case.   
     \begin{theorem}\label{optimalc2}
     Let $\mathcal{S}$ denote an aperiodic LAZ sequence set from Corollary \ref{cor-p-ssetap66} above. Then
     
     (i)\,\,the $\mathcal{S}$ presented in Corollary \ref{cor-p-ssetap66} (i) is asymptotically optimal under the YZFLLT bound in Lemma \ref{2013bound} (ii) if $e\geq 2$;
     
     (ii)\,\,the optimality factor of the $\mathcal{S}$ presented in Corollary \ref{cor-p-ssetap66} (ii) is $\lim\limits_{D \to \infty} \hat{\rho}_{\text{LAZ}}=2$ under the YZFLLT bound in Lemma \ref{2013bound} (ii), where  $\hat{\rho}_{\text{LAZ}}=\frac{\hat{\theta}_{\text{max}}}{\hat{\Delta}}$.
    \end{theorem}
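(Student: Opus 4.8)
The plan is to insert the explicit parameters from Corollary~\ref{cor-p-ssetap66} into the aperiodic bound $\hat{\Delta}$ of Lemma~\ref{2013bound}(ii) and then take the limit $p\to\infty$ with $e$ fixed (so that $D=p^{2e-1}(p-1)\to\infty$), exactly as in the proof of Theorem~\ref{optimalc1}. In both parts $M=N$, $D=NK$, $K=p^{e}$ and $N=p^{e-1}(p-1)$, so that $K/N=p/(p-1)\to1$ and $D=NK\sim K^{2}$; these two facts drive the whole argument.

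First I would record the algebraic identities that collapse the factors in $\hat{\Delta}$. A short computation gives $MZ_xZ_y-D=D(p-2)$ in both cases: for part~(i), with $Z_x=p-1$ and $Z_y=K$, this is $N(p-1)K-NK$, while for part~(ii), with $Z_x=N$ and $Z_y=p$, it is $N^{2}p-NK$, and both reduce to $D(p-2)$. Hence the numerator of $\hat{\Delta}^{2}$ factors as $MZ_xZ_y-D-Z_x+1=(p-2)(D-1)$ in case~(i) and as $D(p-2)-(N-1)$ in case~(ii); the remaining factors are $MZ_x-1=N(p-1)-1$ and $D+Z_x-1=D+p-2$ for~(i), and $MZ_x-1=N^{2}-1$ and $D+Z_x-1=D+N-1$ for~(ii).

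The cleanest route is then to compare $\hat{\Delta}$ with the periodic bound $\Delta$ through the exact identity
\begin{equation*}
\hat{\Delta}^{2}=\Delta^{2}\cdot\frac{D}{D+Z_x-1}\cdot\frac{MZ_xZ_y-D-Z_x+1}{MZ_xZ_y-D}.
\end{equation*}
Using the simplifications above, the correction factor equals $\frac{D-1}{D+p-2}$ in case~(i) and $\frac{D}{D+N-1}\bigl(1-\frac{N-1}{D(p-2)}\bigr)$ in case~(ii); since $N/D=1/K\to0$, both tend to $1$, so $\hat{\Delta}\sim\Delta$. Because Theorem~\ref{optimalc1} already establishes $\Delta\sim\theta_{\max}=K$, it follows that $\hat{\Delta}\sim K$ in both cases, and the problem reduces to computing $\hat{\theta}_{\max}/K$. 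In case~(i), $\hat{\theta}_{\max}=K+p-2$, whence $\hat{\theta}_{\max}/K=1+(p-2)/p^{e}\to1$ precisely when $e\geq2$, giving $\lim\hat{\rho}_{\text{LAZ}}=1$. In case~(ii), $\hat{\theta}_{\max}=K+N-1$, whence $\hat{\theta}_{\max}/K\to1+\lim(N/K)=2$, giving $\lim\hat{\rho}_{\text{LAZ}}=2$.

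I expect the main difficulty to be bookkeeping rather than any conceptual hurdle: one must check that the aperiodic correction factor really tends to $1$ (equivalently that $\hat{\Delta}\sim\Delta\sim\sqrt{D}\sim K$) with leading coefficient exactly $1$, so that the constants $1$ and $2$ emerge sharply rather than only up to a bounded factor. The single genuinely important point is to isolate where the hypothesis $e\geq2$ is used: it enters nowhere in the bound itself and is forced solely by the term $(p-2)/p^{e}$ in $\hat{\theta}_{\max}/K$, which vanishes iff $e\geq2$. For $e=1$ one has $K=p$ and $\hat{\theta}_{\max}=2p-2$, so $\hat{\theta}_{\max}/K\to2$ and the optimality factor degrades to $2$, mirroring part~(ii); I would include this computation as a remark to show the threshold $e\geq2$ is sharp.
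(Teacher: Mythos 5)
Your proposal is correct, and all the algebraic identities you use check out: $MZ_xZ_y-D=D(p-2)$ in both cases, the ratio identity $\hat{\Delta}^{2}=\Delta^{2}\cdot\frac{D}{D+Z_x-1}\cdot\frac{MZ_xZ_y-D-Z_x+1}{MZ_xZ_y-D}$ follows immediately from the two YZFLLT formulas, and the resulting correction factors $\frac{D-1}{D+p-2}$ and $\frac{D}{D+N-1}\bigl(1-\frac{N-1}{D(p-2)}\bigr)$ indeed tend to $1$. Your route differs in organization from the paper's: the paper substitutes the parameters directly into the aperiodic bound and expands everything in powers of $p$ (its displays (\ref{bound2ic}) and (\ref{bound2iic})), then reads off $\hat{\Delta}\sim\sqrt{D}$ and computes $\lim\hat{\theta}_{\max}^{2}/D$; you instead factor the optimality ratio as $\hat{\rho}_{\text{LAZ}}=\frac{\hat{\theta}_{\max}}{K}\cdot\frac{K}{\Delta}\cdot\frac{\Delta}{\hat{\Delta}}$, reuse Theorem \ref{optimalc1} for the middle factor, and handle the last factor by the exact identity. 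What your packaging buys is threefold: it avoids re-expanding the aperiodic bound from scratch, it makes transparent that $\hat{\Delta}\sim\Delta$ in this parameter regime (a fact the paper's computation contains but never isolates), and it pinpoints exactly where $e\geq2$ enters — only through $(p-2)/p^{e}\to0$ — which lets you add the sharpness observation that $e=1$ degrades the factor to $2$, something the paper does not state. What the paper's direct substitution buys is self-containedness: it does not depend on the periodic theorem, so part (i) and part (ii) stand alone. One small presentational caution: when you invoke Theorem \ref{optimalc1}, note explicitly that the paper's proof there shows $\Delta^{2}\sim D\sim K^{2}$ with leading constant exactly $1$ (not merely $\theta_{\max}/\Delta\to1$ as an abstract statement), since that is precisely the normalization your chain of factors requires.
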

    \begin{proof}
		The parameters $M$, $D$, $Z_x$, $Z_y$, $\theta_{\text{max}}$ and $\hat{\theta}_{\text{max}}$ can be seen directly from Theorem \ref{2025sset} and \ref{Newsset} by taking $s=R=1$. By Theorem \ref{2013bound}, the YZFLLT bound $\Delta$ can be
		derived as follows:
        \begin{eqnarray}
			\hat{\Delta}&=&\frac{D}{\sqrt{Z_y}}\sqrt{\frac{MZ_xZ_y-D-Z_x+1}{(MZ_x-1)(D+Z_x-1)}}\nonumber \\&=&\sqrt{\frac{D^2(MZ_xZ_y-D-Z_x+1)}{Z_y(MZ_x-1)(D+Z_x-1)}}\nonumber \\
            &=&\sqrt{\frac{D(MZ_xZ_y-D-Z_x+1)}{Z_y(MZ_x-1)(1+\frac{Z_x}{D}-\frac{1}{D})}}\nonumber \\
			&=&\sqrt{\frac{D(1-\frac{D}{MZ_xZ_y}-\frac{1}{MZ_y}+\frac{1}{MZ_xZ_y})}{(1-\frac{1}{MZ_x})(1+\frac{Z_x}{D}-\frac{1}{D})}}.\label{bound2c}
		\end{eqnarray}
        
        (i)\,\,Recall $M=N$, $D=NK$, $Z_y=K=p^{e}$, $Z_x=p-1$, then (\ref{bound2c}) yields
        \begin{eqnarray}
			\hat{\Delta}
			&=&\sqrt{\frac{D(1-\frac{1}{p-1}-\frac{1}{p^ep^{e-1}(p-1)}+\frac{1}{p^ep^{e-1}(p-1)^2})}{(1-\frac{1}{p^{e-1}(p-1)^2})(1+\frac{1}{p^ep^{e-1}}-\frac{1}{p^ep^{e-1}(p-1)})}}.\label{bound2ic}
		\end{eqnarray}
        we observe that if $e\geq2$ and $p\rightarrow\infty$, then (\ref{bound2ic}) implies 
        $$\lim\limits_{p \to \infty}\frac{\hat{\theta}_{\text{max}}}{\hat{\Delta}}=\lim\limits_{p \to \infty}\sqrt{\frac{(K+p-2)^2}{D}}=\lim\limits_{p \to \infty}\sqrt{\frac{(K+p-2)^2}{NK}}=1,$$ since $\hat{\theta}_{\text{max}}=K+Z_x-1=p^e+p-2\rightarrow p^e=K$ and $\frac{K}{N}=\frac{p^e}{p^{e-1}(p-1)}\rightarrow1$.  
        
        (ii)\,\,Recall $M=N$, $D=NK$, $Z_y=p$, $Z_x=N=p^{e-1}(p-1)$,  (\ref{bound2c}) produces
         \begin{eqnarray}
			\hat{\Delta}
			&=&\sqrt{\frac{D(1-\frac{1}{p-1}-\frac{1}{p^e(p-1)}+\frac{1}{p^ep^{e-1}(p-1)^2})}{(1-\frac{1}{p^{2e-2}(p-1)^2})(1+\frac{1}{p^e}-\frac{1}{p^ep^{e-1}(p-1)})}}.\label{bound2iic}
		\end{eqnarray}
        But since $\hat{\theta}_{\text{max}}=K+Z_x-1=K+N-1$, 
        combining (\ref{bound2iic}) and $\frac{K}{N}=\frac{p^e}{p^{e-1}(p-1)}\rightarrow1$, we obtain
        \begin{eqnarray}
			\lim\limits_{p \to \infty}\frac{\hat{\theta}_{\text{max}}}{\hat{\Delta}}&=&\sqrt{\frac{K^2+(N-1)^2+2K(N-1)}{NK}}\nonumber \\&=&\sqrt{\frac{K^2}{NK}+\frac{N^2}{NK}-\frac{2N}{NK}+\frac{1}{NK}+\frac{2KN}{NK}-\frac{2K}{NK}}\nonumber \\
            &=&\sqrt{\frac{K}{N}+\frac{N}{K}+2}\nonumber \\
            &=&2
		\end{eqnarray}
        The proof of Theorem \ref{optimalc2} is completed.
	\end{proof}
    \begin{remark}
		Case A: When $K=p(p+2)$, where $p$ and $p+2$ are odd prime, and suppose $N=\textrm{lcm}(p-1,p+1)=\frac{p^2-1}{2}$. There exist an aperiodic LAZ sequence sets with parameters $(N,NK,\Pi,K)$, where $\Pi=(-Z_x,Z_x)\times(-Z_y,Z_y)$, $Z_y=K=p^{e}$, $Z_x=p-1$. Moreover, $\lim\limits_{D \to \infty} \hat{\rho}_{\text{LAZ}}=\sqrt{2}$, where $\hat{\rho}_{\text{LAZ}}=\frac{\hat{\theta}_{\text{max}}}{\hat{\Delta}}$.

        Case B: When $K=p_1p_2\cdots p_s$, where $p_1<p_2<\cdots<p_s$ are odd primes, and suppose $N=E=\textrm{lcm}\{\varphi(p_i)=p_i-1:1\leq i\leq s\}=\frac{(p_1-1)(p_2-1)\cdots(p_s-1)}{2}$. There exist an aperiodic LAZ sequence set with parameters $(N,NK,\Pi,K+p_1-2)$, where $\Pi=(-Z_x,Z_x)\times(-Z_y,Z_y)$, $Z_y=K=p_1p_2\cdots p_s$, $Z_x=p_1-1$, $\theta_{\text{max}}=K=p_1p_2\cdots p_s$. Moreover, $\lim\limits_{D \to \infty} \hat{\rho}_{\text{LAZ}}=\sqrt{2}$, where $\hat{\rho}_{\text{LAZ}}=\frac{\hat{\theta}_{\text{max}}}{\hat{\Delta}}$.
	\end{remark}

    In the above, we discuss the periodic and aperiodic LAZ sequence set by new LPNFs in Theorems \ref{NewLPNF} and their optimality. The following Theorem, we will state a series of periodic and aperiodic LAZ sequence set from Theorem \ref{NewLPNF2}.
	\begin{theorem}\label{Newsset3}
		Let $K=p_1^{e_1}p_2^{e_2}\cdots p_s^{e_s}$, where $s\geq1$, $\{p_1,p_2,\cdots,p_s\}$ be distinct odd primes and $e_i\geq1$ with $1\leq i\leq s$,  $E=\textrm{lcm}\{\varphi(p_i^{e_i}):1\leq i\leq s\}=NR$ with $N\geq2$. Then, from Theorem \ref{NewLPNF2}, we have:
        
        (i)\,\,A periodic LAZ sequence set $\mathcal{S}$ with parameters $(N,NK,\Pi,K_1)$, where $\Pi=(-Z_x,Z_x)\times(-Z_y,Z_y)$, $Z_y=K_1-K+2$, $K\leq K_1$, $Z_x=\min\{\frac{p_i-1}{\gcd(p_i-1,R)}:1\leq i\leq s\}$.

        (ii)\,\,An aperiodic LAZ sequence set $\mathcal{S}$ with parameters $(N,NK,\Pi,K_1+Z_x-1)$, where $\Pi=(-Z_x,Z_x)\times(-Z_y,Z_y)$, $Z_y=K_1-K+2$, $K\leq K_1$, $Z_x=\min\{\frac{p_i-1}{\gcd(p_i-1,R)}:1\leq i\leq s\}$.
	\end{theorem}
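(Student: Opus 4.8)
The plan is to derive both sequence sets as immediate instances of the interleaving construction recorded in Lemma~\ref{2025sset}, feeding it the explicit function supplied by Theorem~\ref{NewLPNF2} as input. Theorem~\ref{NewLPNF2} already certifies that the function $f$ of Construction~\ref{cons2} is an $\langle N,K_1,Z_x,Z_y\rangle$-LPNF with $Z_x=\min\{\frac{p_i-1}{\gcd(p_i-1,R)}:1\le i\le s\}$ and $Z_y=K_1-K+2$, and the standing hypotheses ($K_1\ge K$, $N\ge2$, and the $p_i$ odd) guarantee that $0<Z_y\le K_1$ and $K_1\ge N$, so the LPNF is genuinely of the form demanded by Lemma~\ref{2025sset}. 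Hence no new combinatorial input is required; the whole content is a parameter substitution, exactly as in the proofs of Theorems~\ref{Newsset} and~\ref{Newssetap}.

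First I would instantiate Lemma~\ref{2025sset} with this LPNF, taking care over which modulus plays which role. The decisive observation is that the codomain of $f$ is $\mathbb{Z}_{K_1}$, so the symbol ``$K$'' appearing in the statement of Lemma~\ref{2025sset} must here be read as $K_1$. Under this identification the lemma returns a set size $M=N$, a sequence length $D=NK_1$, a low ambiguity zone $\Pi=(-Z_x,Z_x)\times(-Z_y,Z_y)$, and AF thresholds $\theta_{\max}=K_1$ in the periodic case and $\hat{\theta}_{\max}=K_1+Z_x-1$ in the aperiodic case. Substituting $Z_x=\min\{\frac{p_i-1}{\gcd(p_i-1,R)}:1\le i\le s\}$ and $Z_y=K_1-K+2$ then reproduces the parameters claimed in parts (i) and (ii) respectively, consistent with the row for Theorem~\ref{optimal3c} in Table~\ref{Table1}.

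The only genuine point of care—and the step I would flag explicitly—is the bookkeeping forced by the simultaneous presence of the two moduli $K$ and $K_1$. One must not conflate the codomain modulus $K_1$, which governs the sequence length $NK_1$ and the base AF magnitude, with the auxiliary modulus $K$, which enters the picture only through the inherited width $Z_y=K_1-K+2$ of the Doppler zone. There is no deep obstruction here: once this distinction is respected, both the periodic and aperiodic conclusions are already packaged together in Lemma~\ref{2025sset}, and the theorem follows verbatim by reading off the appropriate threshold in each case.
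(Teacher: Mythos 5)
Your proposal is correct and takes essentially the same route as the paper, whose proof is exactly this one-line application of Lemma~\ref{2025sset} to the $\langle N,K_1,Z_x,Z_y\rangle$-LPNF of Theorem~\ref{NewLPNF2}. In fact your bookkeeping is more careful than the paper's: since the codomain modulus is $K_1$, the lemma yields length $D=NK_1$ (with $\theta_{\max}=K_1$ and $\hat{\theta}_{\max}=K_1+Z_x-1$), so the ``$NK$'' in the theorem statement and its proof is a typo, as confirmed by Corollary~\ref{cor-p-sset3c}, Table~\ref{Table1}, and Table~\ref{table2}, all of which use $NK_1$.
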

\begin{proof}
The results follow directly by applying the LPNFs given in Theorem~\ref{NewLPNF2} to Lemma~\ref{2025sset}, with the corresponding parameters $(N, NK, \Pi, K_1)$  and $(N, NK, \Pi, K_1+Z_x-1)$ specified in Theorem~\ref{Newsset3} (i) and (ii).
\end{proof}

    From the Throrem \ref{Newsset3}, the following result is immediate.
    \begin{corollary}\label{cor-p-sset3c}
		Let $K=p^{e}$, where $p$ is an odd prime, $e\geq1$, and let $N=\varphi(p^{e})=p^{e-1}(p-1)$. Then we obtain:
        
        (i)\,\,A periodic LAZ sequence set $\mathcal{S}$ with parameters $(N,NK_1,\Pi,K_1)$, where $\Pi=(-Z_x,Z_x)\times(-Z_y,Z_y)$, $Z_y=K_1-K+2$, $Z_x=p-1$.
        
        (ii)\,\,An aperiodic LAZ sequence set $\mathcal{S}$ with parameters $(N,NK_1,\Pi,K_1+p-2)$, where $\Pi=(-Z_x,Z_x)\times(-Z_y,Z_y)$, $Z_y=K_1-K+2$, $Z_x=p-1$.
	\end{corollary}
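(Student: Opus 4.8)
The plan is to obtain Corollary~\ref{cor-p-sset3c} as the single-prime-power specialization of Theorem~\ref{Newsset3}. Concretely, I would set $s=1$, $p_1=p$, and $e_1=e$, so that $K=p^e$. The hypothesis of Theorem~\ref{Newsset3} requires a factorization $E=NR$ with $N\geq 2$; here I take $N=\varphi(p^e)=p^{e-1}(p-1)=E$, which forces $R=1$. The admissibility condition $N\geq 2$ holds automatically, since $p$ is an odd prime gives $p-1\geq 2$ and hence $N=p^{e-1}(p-1)\geq 2$.

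With these choices the generic parameter $Z_x=\min\{(p_i-1)/\gcd(p_i-1,R):1\le i\le s\}$ collapses to a single term, and since $R=1$ we have $\gcd(p-1,1)=1$, whence $Z_x=(p-1)/1=p-1$. The remaining parameters carry over verbatim from Theorem~\ref{Newsset3}: the length is $D=NK_1$, the zone dimension $Z_y=K_1-K+2$ is unchanged, and the periodic maximum sidelobe is $\theta_{\max}=K_1$, giving part~(i) with parameters $(N,NK_1,\Pi,K_1)$. For the aperiodic case, Theorem~\ref{Newsset3}(ii) yields $\hat\theta_{\max}=K_1+Z_x-1$; substituting $Z_x=p-1$ gives $\hat\theta_{\max}=K_1+(p-1)-1=K_1+p-2$, which is precisely part~(ii) with parameters $(N,NK_1,\Pi,K_1+p-2)$.

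Since every step is a direct substitution into an already-proved theorem, there is no genuine obstacle to overcome; the only points requiring care are (a) confirming the admissibility condition $N\ge 2$ of Theorem~\ref{Newsset3} is met by $N=p^{e-1}(p-1)$, and (b) verifying the two small arithmetic simplifications $Z_x=p-1$ (forced by $R=1$) and $\hat\theta_{\max}=K_1+p-2$. Both are immediate. I would close by noting that the constraint $K\le K_1$ inherited from Construction~\ref{cons2} is the sole restriction on $K_1$, so the construction remains valid for every $K_1\ge p^e$ and the flexibility in $K_1$ is exactly what distinguishes these parameters from the $K_1=K$ case.
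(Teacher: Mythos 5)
Your proof is correct and takes essentially the same route as the paper, which derives this corollary as an immediate specialization of Theorem~\ref{Newsset3} with $s=1$, $N=\varphi(p^e)=E$, and hence $R=1$, exactly as you spell out. The only point worth flagging is that Theorem~\ref{Newsset3} as printed gives the length as $NK$ (a typo), so your value $D=NK_1$ --- forced by applying Lemma~\ref{2025sset} to the $\langle N,K_1,Z_x,Z_y\rangle$-LPNF of Theorem~\ref{NewLPNF2} --- is the correct one and agrees with the corollary's statement.
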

    \begin{theorem}\label{optimal3c}
    Let $\mathcal{S}$ denote a LAZ sequence set from Corollary \ref{cor-p-sset3c} above. And let $\lim\limits_{p \to \infty}\sqrt{\frac{K_1}{N}}=1$ and $\lim\limits_{p \to \infty} \frac{K_1}{(p-1)(K_1-K+2)}=0$. Then
     
     (i)\,\,the $\mathcal{S}$ attains asymptotically optimal with respect to the periodic YZFLLT bound in Lemma \ref{2013bound} (i);
     
     (ii)\,\,the $\mathcal{S}$ attains asymptotically optimal with respect to the aperiodic YZFLLT bound in Lemma \ref{2013bound} (ii) if $e\geq2$.
    \end{theorem}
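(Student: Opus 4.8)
The plan is to feed the parameters of Corollary~\ref{cor-p-sset3c} into the two closed forms of the YZFLLT bounds already simplified in~(\ref{bound1c}) and~(\ref{bound2c}), and then let $p\to\infty$ term by term, using the two standing hypotheses $\lim_{p\to\infty}\sqrt{K_1/N}=1$ and $\lim_{p\to\infty}\frac{K_1}{(p-1)(K_1-K+2)}=0$ together with the explicit value $N=p^{e-1}(p-1)$ and the inequality $K_1\ge K=p^e$. In both parts the point is that these hypotheses are exactly what force $\Delta,\hat{\Delta}\sim\sqrt{NK_1}$, so the optimality factor reduces to an easily analyzed ratio of the form $\sqrt{\theta_{\max}^2/(NK_1)}$.

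For part~(i) I would substitute $M=N$, $D=NK_1$, $Z_x=p-1$, and $Z_y=K_1-K+2$ into~(\ref{bound1c}). The crucial quantity is $\frac{D}{MZ_xZ_y}=\frac{K_1}{(p-1)(K_1-K+2)}$, which tends to $0$ precisely by the second hypothesis; since $MZ_xZ_y\to\infty$, the term $\frac{1}{MZ_xZ_y}$ also vanishes. Hence~(\ref{bound1c}) gives $\Delta\to\sqrt{D}=\sqrt{NK_1}$, and therefore
$$\lim_{p\to\infty}\frac{\theta_{\max}}{\Delta}=\lim_{p\to\infty}\frac{K_1}{\sqrt{NK_1}}=\lim_{p\to\infty}\sqrt{\frac{K_1}{N}}=1$$
by the first hypothesis, with no restriction on $e$.

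For part~(ii) I would substitute the same parameters into~(\ref{bound2c}) and check that every correction term vanishes: besides $\frac{D}{MZ_xZ_y}\to0$ (second hypothesis) one has $\frac{1}{MZ_y}=\frac{1}{N(K_1-K+2)}\to0$, $\frac{1}{MZ_x}=\frac{1}{N(p-1)}\to0$, $\frac{Z_x}{D}=\frac{p-1}{NK_1}\to0$, and $\frac{1}{D}\to0$, so again $\hat{\Delta}\to\sqrt{NK_1}$. The remaining step is the expansion
$$\frac{\hat{\theta}_{\max}^2}{\hat{\Delta}^2}\longrightarrow\frac{(K_1+p-2)^2}{NK_1}=\frac{K_1}{N}+\frac{2(p-2)}{N}+\frac{(p-2)^2}{NK_1}.$$
The first summand tends to $1$ by the first hypothesis. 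For $e\ge2$ we have $N=p^{e-1}(p-1)\ge p(p-1)$, which grows at least quadratically in $p$, so both the middle term $\frac{2(p-2)}{N}$ and the last term $\frac{(p-2)^2}{NK_1}$ (note $NK_1\ge p^{2e-1}(p-1)$) tend to $0$; hence $\lim_{p\to\infty}\hat{\rho}_{\text{LAZ}}=1$.

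The main obstacle — really the only nontrivial point — is isolating the effect of the order of growth of $N$. When $e=1$ one has $N=p-1$, only linear in $p$, and then the middle and last summands approach $2$ and $1$ respectively rather than $0$, so the aperiodic optimality factor no longer tends to $1$; this is precisely why the hypothesis $e\ge2$ is imposed in part~(ii), whereas part~(i) survives for all $e$ because $\theta_{\max}=K_1$ carries no additive $p$-term. All remaining manipulations are routine verifications that the correction terms in~(\ref{bound1c}) and~(\ref{bound2c}) vanish under the two stated limiting hypotheses.
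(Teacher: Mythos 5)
Your proposal is correct and follows essentially the same route as the paper's own proof: you substitute the parameters of Corollary~\ref{cor-p-sset3c} into the simplified bounds (\ref{bound1c}) and (\ref{bound2c}), show the correction terms vanish under the two limiting hypotheses, and reduce the optimality factors to $\sqrt{K_1/N}\to 1$. If anything, your part~(ii) is more explicit than the paper's, since you isolate the summands $\frac{2(p-2)}{N}$ and $\frac{(p-2)^2}{NK_1}$ and show that the hypothesis $e\geq 2$ (i.e.\ at least quadratic growth of $N=p^{e-1}(p-1)$) is precisely what forces them to zero, and that for $e=1$ the ratio tends to $2$ instead --- a point the paper's proof glosses over.
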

    \begin{proof}
        By (\ref{bound1c}) and (\ref{bound2c}), we obtain the YZLFLT bounds $\Delta$ and $\hat{\Delta}$: 
\begin{equation}\label{bounds0}
			\left\{\begin{array}{ll}
				\Delta=\frac{D}{\sqrt{Z_y}}\sqrt{\frac{MZ_xZ_y-D}{D(MZ_x-1)}}=\sqrt{\frac{D(1-\frac{D}{MZ_xZ_y})}{1-\frac{1}{MZ_xZ_y}}}, \\
				\hat{\Delta}=\frac{D}{\sqrt{Z_y}}\sqrt{\frac{MZ_xZ_y-D-Z_x+1}{(MZ_x-1)(D+Z_x-1)}}=\sqrt{\frac{D(1-\frac{D}{MZ_xZ_y}-\frac{1}{MZ_y}+\frac{1}{MZ_xZ_y})}{(1-\frac{1}{MZ_x})(1+\frac{Z_x}{D}-\frac{1}{D})}}.
			\end{array}\right.
		\end{equation}
 
 (i)\,\,We use $\Delta$ in (\ref{bounds0}) with $M=N=p^{e-1}(p-1)$, $D=NK_1$, $Z_y=K_1-K+2$, $Z_x=p-1$ to get
 \begin{eqnarray}
			\Delta
            &=&\sqrt{\frac{D(1-\frac{K_1}{(p-1)(K_1-K+2)})}{1-\frac{1}{p^{e-1}(p-1)^2(K_1-K+2)}}}\label{bound1i}.
		\end{eqnarray}
        Moreover $\theta_{\text{max}}=K_1$, $\lim\limits_{p \to \infty}\sqrt{\frac{K_1}{N}}=1$, and $\lim\limits_{p \to \infty} \frac{K_1}{(p-1)(K_1-K+2)}=0$, it follows that $$\lim\limits_{p \to \infty}\frac{\theta_{\text{max}}}{\Delta}=\lim\limits_{p \to \infty}\sqrt{\frac{K_1^2}{D}}=\lim\limits_{p \to \infty}\sqrt{\frac{K_1}{N}}=1.$$ 

		(ii)\,\,We use $\hat{\Delta}$ in (\ref{bounds0}) with $M=N=p^{e-1}(p-1)$, $D=NK_1$, $Z_y=K_1-K+2$, $Z_x=p-1$ to get
        \begin{eqnarray}
			\hat{\Delta}
            &=&\sqrt{\frac{D(1-\frac{K_1}{(p-1)(K_1-K+2)}-\frac{1}{p^{e-1}(p-1)(K_1-K+2)}+\frac{1}{p^{e-1}(p-1)^2(K_1-K+2)})}{(1-\frac{1}{p^{e-1}(p-1)^2})(1+\frac{1}{p^{e-1}K_1}-\frac{1}{p^{e-1}(p-1)K_1})}}\label{bound1iend}.
		\end{eqnarray}
We observe that if $e\geq2$ and $p\rightarrow\infty$, combining $\hat{\theta}_{\text{max}}=K_1+p-2$, we have 

$$\lim\limits_{p \to \infty}\frac{\hat{\theta}_{\text{max}}}{\hat{\Delta}}=\lim\limits_{p \to \infty}\sqrt{\frac{(K_1)^2+2K_1(p-2)+(p-2)^2}{NK_1}}=\lim\limits_{p \to \infty}\sqrt{\frac{K_1}{N}}=1$$ 
since $\lim\limits_{p \to \infty}\sqrt{\frac{K_1}{N}}=1$ and $\lim\limits_{p \to \infty} \frac{K_1}{(p-1)(K_1-K+2)}=0$.
The proof of Theorem \ref{optimal3c} is complete.
    \end{proof}

	For the $(N,NK,\Pi,\theta_{\text{max}}\,\,or\,\,\hat{\theta}_{\text{max}})$-LAZ sequence set constructed in \cite{WangZhou2025}, the value of the parameter $N$ is only odd. But for the construction in Theorem \ref{NewLPNF} (i), the value of parameter $N$, as a factor of $\textrm{lcm}\{\varphi(p_i^{e_i}):1\leq i\leq s\}$, can be chosen to be even. Thus the LAZ sequence set in Theorem \ref{Newsset} has more general and flexible parameters. Moreover, we also present two other new classes of constructions of LPNFs in Theorem \ref{NewLPNF} (ii) and Theorem \ref{NewLPNF2}, which are used to design more LAZ sequence sets with flexible parameters. 

	\subsection{Example of Periodic and Aperiodic LAZ sequence set}
    In this subsection, we present two examples to illustrate our proposed LAZ sequence sets in Theorems \ref{Newsset}, \ref{Newssetap} and \ref{Newsset3}.

	\begin{example}\label{ex2}
	    Let $K=25$, $N=20$, $\alpha=2$ be the element in $\mathbb{Z}_K$ of order $20$. we define the function
		$$f:\,\,\,\mathbb{Z}_{20} \rightarrow \mathbb{Z}_{25},\,\,\,\,\,f(x)=2^x.$$
		Then	
		\begin{table}[h]
			\centering
			\begin{tabular}{ccccccccccccccccccccc}
				\Xhline{1pt}
				x  & 0& 1 & 2 & 3 & 4 & 5& 6&	7&	8&	9&	10&	11&	12&	13&	14&	15&	16&	17&	18&	19\\
				\Xhline{1.5pt}
				$f(x)$ &1&	2	&4&	8	&16	&7&	14&	3&	6&	12&	24&	23&	21&	17&	9&	18	&11&	22&	19	&13\\
				\Xhline{1pt}
			\end{tabular}
		\end{table}
        
The sequence set \( \mathcal{S} \) is created using Theorem \ref{Newsset}. It can be demonstrated that \( \mathcal{S} \) forms a set of periodic sequences \( (20, 500, \Pi_1, 25) \)-LAZ and \((20, 500, \Pi_2, 25)\)-LAZ, where \( \Pi_1 = (-4, 4) \times (-25, 25) \) and \( \Pi_2 = (-20, 20) \times (-5, 5) \). A glimpse of the periodic auto-AF and cross-AF of the sequences in \( \mathcal{S} \) can be seen in Fig. \ref{fig:examplep2}. 
        \begin{figure}[htp]
			\centering
			\includegraphics[width=1\linewidth]{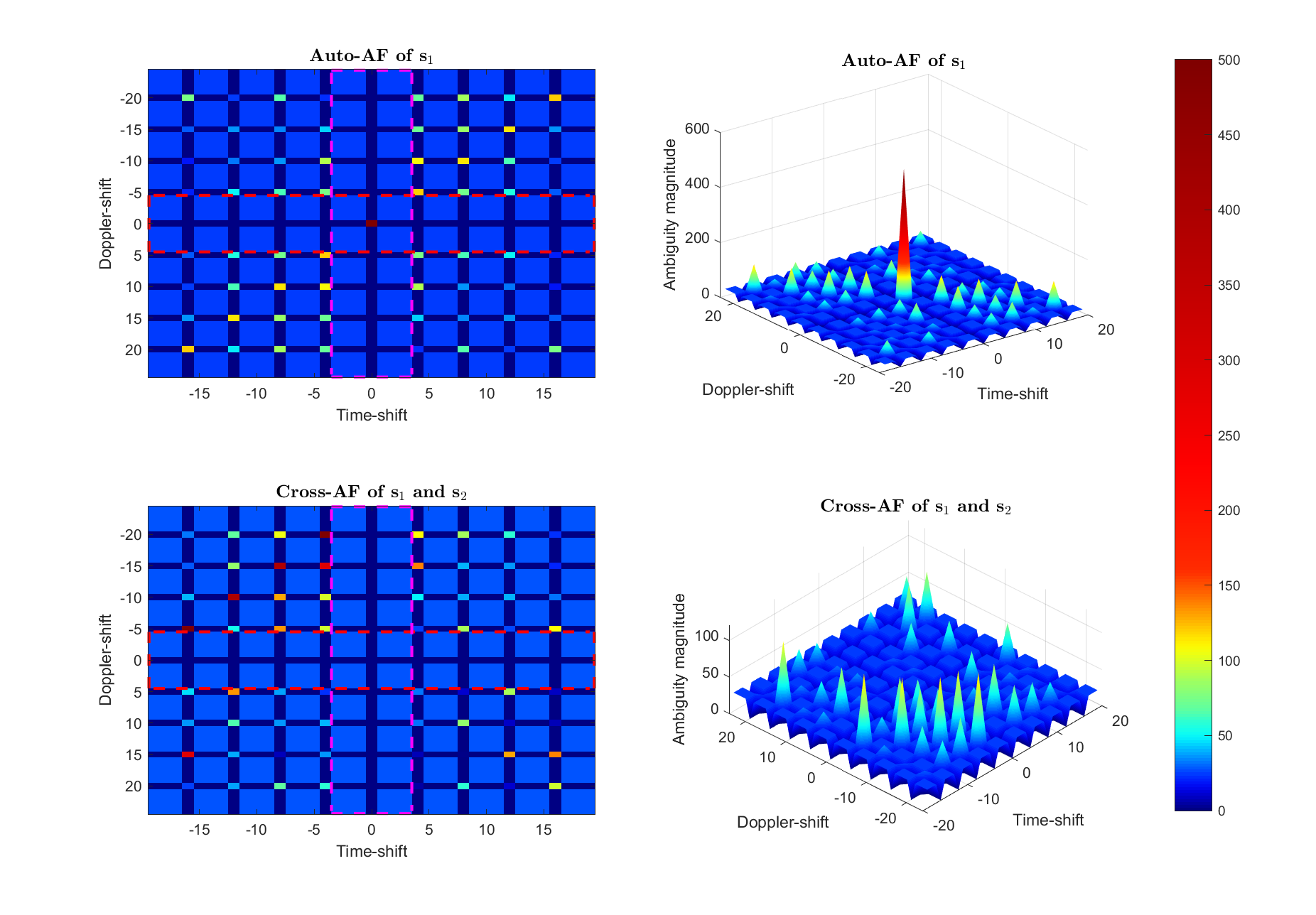}
			\caption{A glimpse of the periodic auto-AF and cross-AF of the sequence set \( \mathcal{S} \) in Example \ref{ex2}.}
			\label{fig:examplep2}
		\end{figure}
        
        Furthermore, it can be shown that \( \mathcal{S} \) also constitutes an aperiodic set of \(  (20, 500, \Pi_1, 30) \)-LAZ and \((20, 500, \Pi_2, 30)\)-LAZ sequences, where \( \Pi_1 = (-4, 4) \times (-25, 25) \) and \( \Pi_2 = (-20, 20) \times (-5, 5) \). A glimpse of the aperiodic auto-AF and cross-AF of the sequences in \( \mathcal{S} \) can be seen in Fig. \ref{fig:exampleap2}. 
		\begin{figure}[htp]
			\centering
			\includegraphics[width=0.9\linewidth]{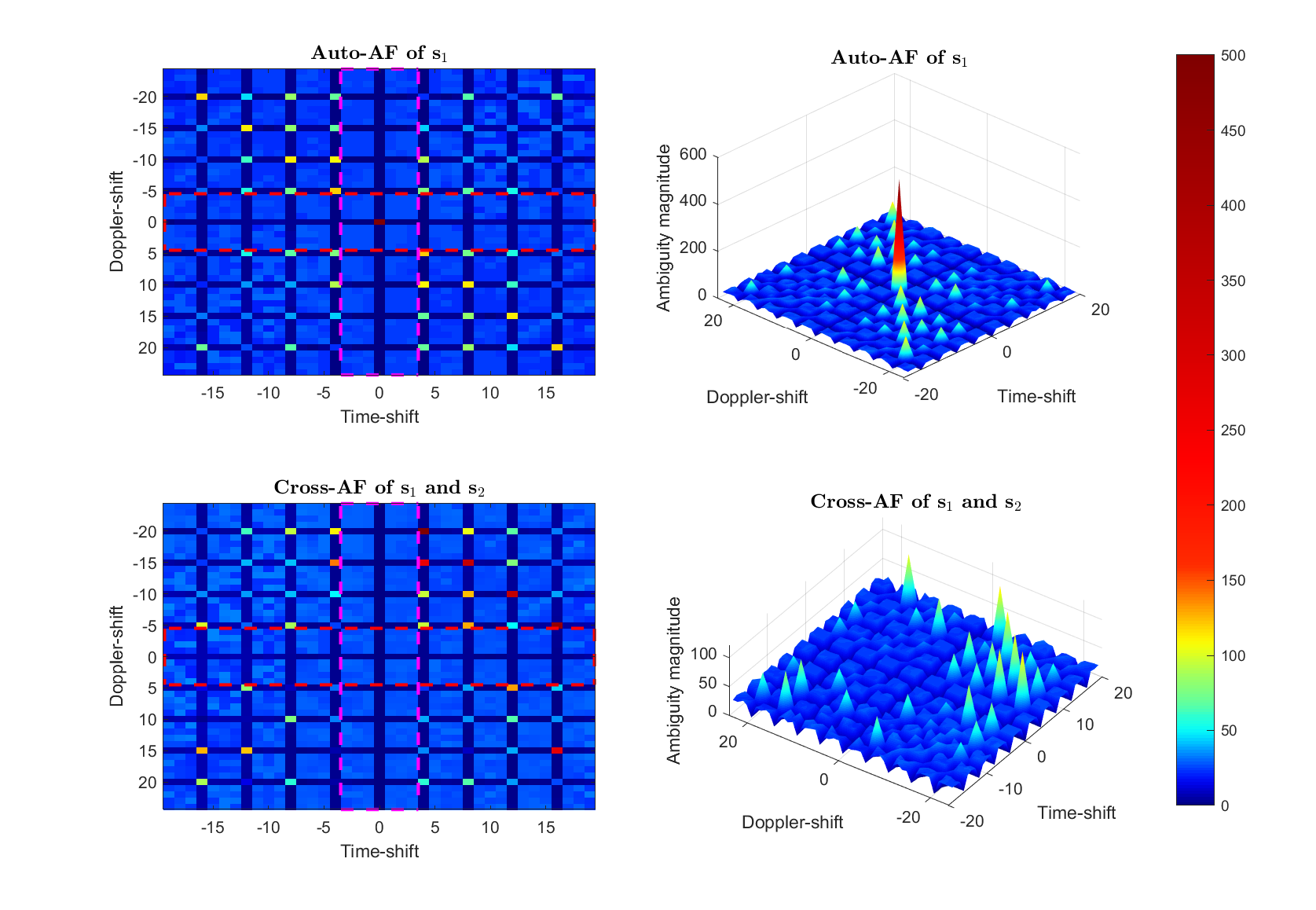}
			\caption{A glimpse of the aperiodic auto-AF and cross-AF of the sequence set $\mathcal{S}$ in Example \ref{ex2}.}
			\label{fig:exampleap2}
		\end{figure}
	\end{example}

	\begin{example}\label{ex3}
		Let $K=7$, $K_1=10$, $N=6$, $\alpha=5$ be the element in $\mathbb{Z}_K$ of order $6$.  we define the function
		$$f:\,\,\,\mathbb{Z}_{6} \rightarrow \mathbb{Z}_{10},\,\,\,\,\,f(x)=\langle 5^x \rangle_{K}.$$
        Regard $f(x)$ as a function in $\mathbb{Z}_{K_1}$. Then	
		\begin{table}[htp]
			\centering
			\begin{tabular}{ccccccc}
				\Xhline{1pt}
				x  & 0& 1 & 2 & 3 & 4 & 5 \\
				\Xhline{1.5pt}
				$f(x)$ & 1 &5 & 4 & 6 & 2 & 3\\
				\Xhline{1pt}
			\end{tabular}
		\end{table}
		
		The sequence set \( \mathcal{S} \) is created using Theorem \ref{Newsset3}. It can be demonstrated that \( \mathcal{S} \) forms a periodic \( (6, 60, \Pi, 10) \)-LAZ sequence set, where \( \Pi = (-6, 6) \times (-5, 5) \). A glimpse of the periodic auto-AF and cross-AF of the sequences in \( \mathcal{S} \) can be seen in Fig. \ref{fig:examplep3}. 
		\begin{figure}[htp]
			\centering
			\includegraphics[width=0.9\linewidth]{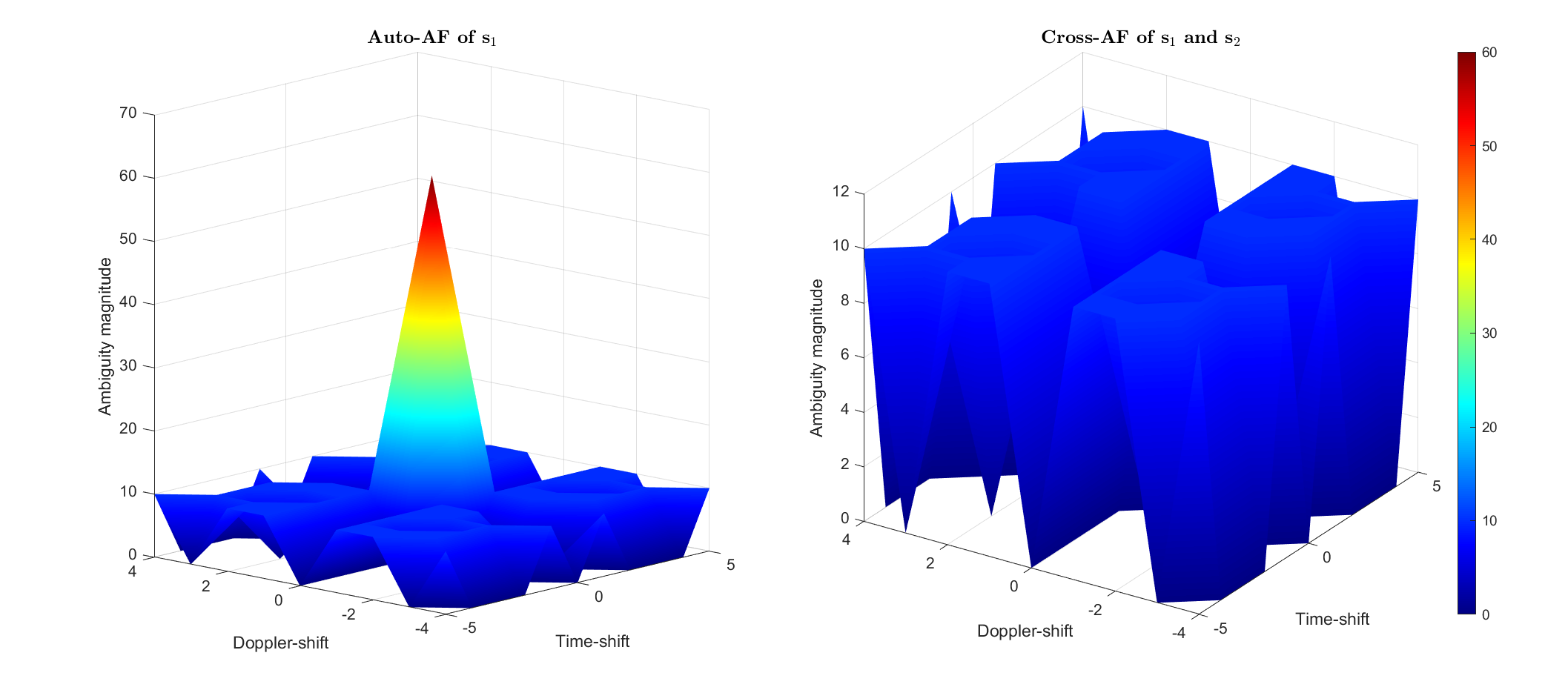}
			\caption{A glimpse of the periodic auto-AF and cross-AF of the sequence set \( \mathcal{S} \) in Example \ref{ex3}.}
			\label{fig:examplep3}
		\end{figure}
        
		Additionally, it can be shown that \( \mathcal{S} \) also constitutes an aperiodic \( (6, 60, \Pi, 15) \)-LAZ sequence set, where \( \Pi = (-6, 6) \times (-5, 5) \). A glimpse of the aperiodic auto-AF and cross-AF of the sequences in \( \mathcal{S} \) can be seen in Fig. \ref{fig:exampleap3}. 
		\begin{figure}[htp]
			\centering
			\includegraphics[width=0.9\linewidth]{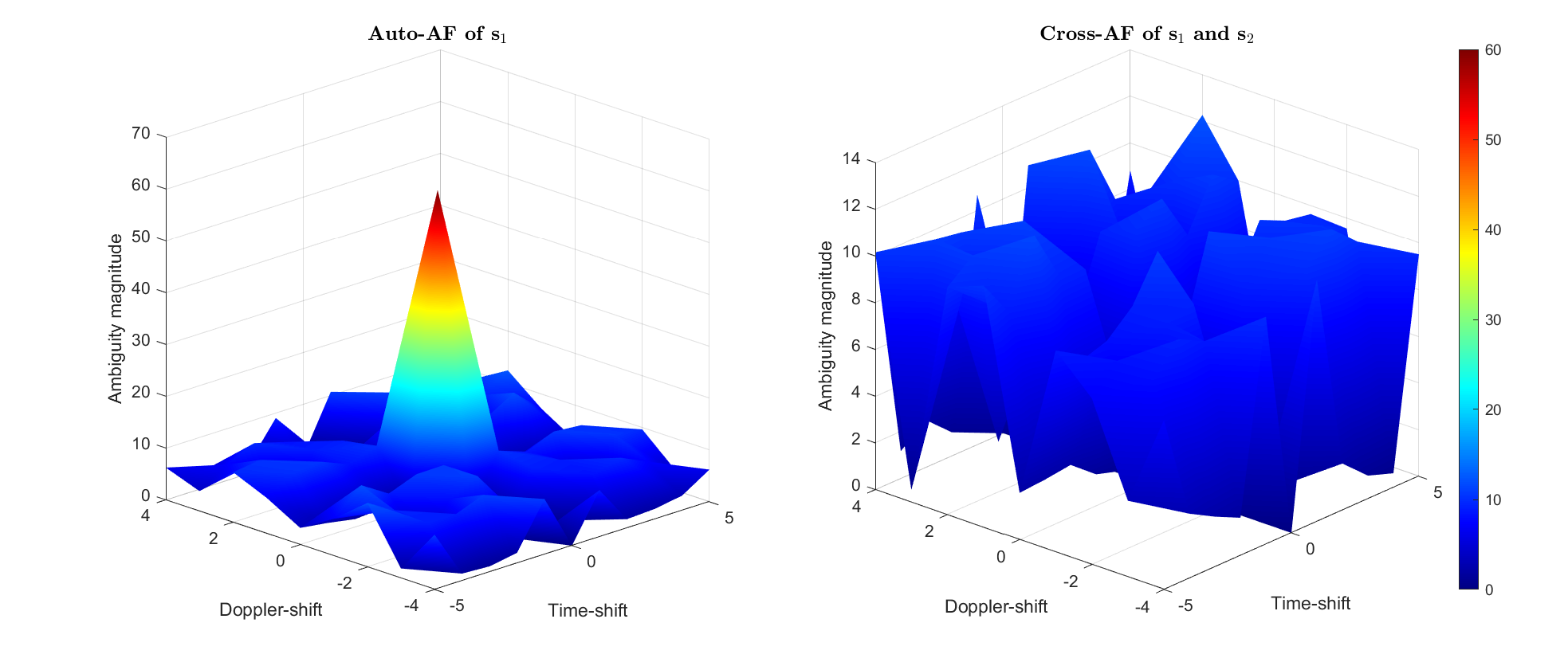}
			\caption{A glimpse of the aperiodic auto-AF and cross-AF of the sequence set $\mathcal{S}$ in Example \ref{ex3}.}
			\label{fig:exampleap3}
		\end{figure}
        

	\end{example}	
	\subsection{Parameters of the Derived Periodic and Aperiodic LAZ Sequence Sets}
	In Tables \ref{th1}, \ref{table12}, and \ref{table2}, we display several parameters of the proposed periodic LAZ sequence sets in Theorems \ref{Newsset} and \ref{Newsset3}, along with their corresponding optimality factor, $\rho_{\text{LAZ}}.$ Furthermore, we provide parameters of the proposed aperiodic LAZ sequence sets in Theorems \ref{Newssetap} and \ref{Newsset3}, along with their corresponding optimality factor, \(\hat{\rho}_{\text{LAZ}}\), in Tables \ref{2} and \ref{3}, respectively. It is obvious from these tables that the optimality factors ${\rho}_{\mathrm{LAZ}}$ and \(\hat{\rho}_{\text{LAZ}}\) of the constructed LAZ sequence sets are asymptotically approaching $1$. Thus, with careful selection of parameters, our sequences are strong candidates for ISAC system.
	
	\begin{table}[htp]
		\caption{Parameters of the proposed periodic LAZ sequence set in Theorem \ref{Newsset}(i) } 
		\label{th1}
		\begin{center}
			\begin{tabular}{|c|c|c|c|c|}
				\hline Set size  & Length  & Low ambiguity zone &  Maximum periodic AF & Optimality factor  \\
				$N$ & $NK$ &  $\Pi$ &  magnitude $\theta_{\max}$ &  $\rho_{\mathrm{LAZ}}$ \\
				\hline 6 & 54 & $(-3,3) \times(-9,9)$ & 9 & 1.4577 \\
				\hline  20 & 500 & $(-5,5) \times(-25,25)$ & 25 & 1.2437 \\
				\hline  42 & 2058 & $(-7,7) \times(-49,49)$ & 49 & 1.1647\\
				\hline  110 & 13310 & $(-11,11) \times(-121,121)$ & 121 & 1.0995 \\
				\hline  156 & 26364 & $(-13,13) \times(-169,169)$ & 169 &  1.0831 \\
				\hline  272 & 78608 & $(-17,17) \times(- 289, 289)$ & 289 &  1.0624  \\
				\hline  506 & 267674 & $(-23,23) \times(-529,529)$ & 529 &1.0454 \\
				\hline  812 & 682892 & $(-29,29) \times(-841,841)$ & 841 & 1.0357 \\
				\hline  930 & 893730 & $(-31,31) \times(-961,961)$ & 961 & 1.0333 \\
				\hline  1332 &1823508 & $(-37,37) \times(-1369,1369)$ & 1369 &  1.0278 \\
				\hline
			\end{tabular}
		\end{center}
	\end{table}
	
	\begin{table}[htp]
		\caption{Parameters of the proposed periodic LAZ sequence set in Theorem \ref{Newsset}(ii)}
		\label{table12}
		\begin{center}
			\begin{tabular}{|c|c|c|c|c|}
				\hline Set size  & Length  & Low ambiguity zone &  Maximum periodic AF & Optimality factor  \\
				$N$ & $NK$ &  $\Pi$ &  magnitude $\theta_{\max}$ &  $\rho_{\mathrm{LAZ}}$ \\
				\hline 6 & 54 & $(-6,6) \times(-3,3)$ & 9 & 1.7078 \\
				\hline  20 & 500 & $(-20,20) \times(-5,5)$ & 25 & 1.2894 \\
				\hline  42 & 2058 & $(-42,42) \times(-7,7)$ & 49 & 1.1829 \\
				\hline  110 & 13310 & $(-110,110) \times(-11,11)$ & 121 & 1.1055 \\
				\hline  156 & 26364 & $(-156,156) \times(-13,13)$ & 169 & 1.0871 \\
				\hline  272 & 78608 & $(-272,272) \times(-17,17)$ & 289 & 1.0646  \\
				\hline  506 & 267674 & $(-506,506) \times(-23,23)$ & 529 &1.0465 \\
				\hline  812 & 682892 & $(-812,812) \times(-29,29)$ & 841 & 1.0364 \\
				\hline  930 & 893730 & $(-930,930) \times(-31,31)$ & 961 &  1.0339 \\
				\hline  1332 &1823508 & $(-1332,1332) \times(-37,37)$ & 1369 & 1.0282 \\
				\hline
			\end{tabular}
		\end{center}
	\end{table}
	
	\begin{table}[htp]
		\caption{Parameters of the proposed periodic LAZ sequence set in Theorem \ref{Newsset3}}
		\label{table2}
		\begin{center}
			\begin{tabular}{|c|c|c|c|c|}
				\hline Set size  & Length  & Low ambiguity zone &  Maximum periodic AF & Optimality factor  \\
				$N$ & $NK_1$ &  $\Pi$ &  magnitude $\theta_{\max}$ &  $\rho_{\mathrm{LAZ}}$ \\
				\hline  6 & 54 & $(-6,6) \times(-4,4)$ & 9 & 1.5275 \\
				\hline  12 & 204 & $(-12,12) \times(-6,6)$ & 17 & 1.3571  \\
				\hline  22 & 638 & $(-22,22) \times(-8,8)$ & 29 &1.2550 \\
				\hline 36 & 1584 &	$(-36,36) \times(-9,9)$& 44 & 1.1888  \\
				\hline  52 & 3224 & $(-52,52) \times(-11,11)$ & 62 &1.1562 \\
				\hline 66 & 5082 &	$(-66,66) \times(-12,12)$& 77 & 1.1367 \\
				\hline  78 & 7020 & $(-78,78) \times(-13,13)$ & 90 &1.1252 \\
				\hline  96 & 10464 & $(-96,96) \times(-14,14)$ & 109 & 1.1115 \\
				\hline 108 & 13176 &	$(-108,108) \times(-15,15)$&122 & 1.1052  \\
				\hline 126 & 17766 &	$(-126,126) \times(-16,16)$& 141 & 1.0969  \\
				\hline
			\end{tabular}
		\end{center}
	\end{table}	
 
	
	\begin{table}[htp]
		\caption{Parameters of the proposed aperiodic LAZ sequence set Theorem \ref{Newsset}(i)}
		\label{2}
		\begin{center}
			\begin{tabular}{|c|c|c|c|c|}
				\hline Set size  & Length  & Low ambiguity zone &  Maximum periodic AF & Optimality factor  \\
				$N$ & $NK$ &  $\Pi$ &  magnitude $\theta_{\max}$ &  $\rho_{\mathrm{LAZ}}$ \\
				\hline 6 & 54 & $(-3,3) \times(-9,9)$ & 11 & 1.8314 \\
				\hline  20 & 500 & $(-5,5) \times(-25,25)$ & 29 & 1.4499 \\
				\hline  42 & 2058 & $(-7,7) \times(-49,49)$ & 55 & 1.3095\\
				\hline  110 & 13310 & $(-11,11) \times(-121,121)$ & 131 &1.1909 \\
				\hline  156 & 26364 & $(-13,13) \times(-169,169)$ & 181 &  1.1603\\
				\hline  272 & 78608 & $(-17,17) \times(- 289, 289)$ & 305 &  1.1213  \\
				\hline  506 & 267674 & $(-23,23) \times(-529,529)$ & 551 &1.0889 \\
				\hline  812 & 682892 & $(-29,29) \times(-841,841)$ & 869 & 1.0702 \\
				\hline  930 & 893730 & $(-31,31) \times(-961,961)$ & 991 & 1.0656 \\
				\hline  1332 &1823508 & $(-37,37) \times(-1369,1369)$ & 1405 &  1.0548 \\
				\hline
			\end{tabular}
		\end{center}
	\end{table}	
	
	\begin{table}[htp]
		\caption{Parameters of the proposed aperiodic LAZ sequence set Theorem \ref{Newsset3}}
		\label{3}
		\begin{center}
			\begin{tabular}{|c|c|c|c|c|}
				\hline Set size  & Length  & Low ambiguity zone &  Maximum aperiodic AF & Optimality factor  \\
				$N$ & $NK_1$ &  $\Pi$ &  magnitude $\hat{\theta}_{\max}$ &  $\hat{\rho}_{\mathrm{LAZ}}$ \\
				\hline  42 & 3990 & $(-6,6) \times(-48,48)$ & 100 & 1.9319 \\
				\hline  156 & 62244 & $(-12,12) \times(-232,232)$ & 410 &1.7752 \\
				\hline  506 & 445786 & $(-22,22) \times(-354,354)$ & 902 & 1.4345 \\
				\hline  1332 & 2521476 & $(-36,36) \times(-526,526)$ & 1928 & 1.2798  \\
				\hline  2756 & 9725924 & $(-52,52) \times(-722,722)$ & 3580 & 1.2060 \\
				\hline  4422 & 23790360 & $(-66,66) \times(-893,893)$ & 5445 & 1.1711 \\
				\hline  6162 & 44853198 & $(-78,78) \times(-1040,1040)$ & 7356 &1.1512 \\
				\hline 9312 & 99340416 & $(-96,96) \times(-1261,1261)$ & 10763 & 1.1308 \\
				\hline  11772 & 156402792 & $(-108,108) \times(-1407,1407)$ & 13393 & 1.1210\\
				\hline 16002 & 284275530 & $(-126,126) \times(-1638,1638)$ & 17890 & 1.1099  \\
				\hline
			\end{tabular}
		\end{center}
	\end{table}

	\section{Concluding Remarks}\label{conclusion} 
	We investigated three constructions of LPNFs with new parameters. By leveraging their structural features, we further proposed a series of periodic and aperiodic LAZ sequence sets with more flexible parameters. In comparison with prior works, we provided more new LAZ sequence sets, owing to reduced constraints on parameters. Significantly, two classes of these newly developed periodic and aperiodic LAZ sequence sets are asymptotically optimal based on YZFLLT bounds. Moreover, another class of periodic LAZ sequence sets is asymptotically optimal based on YZFLLT bounds.  

	\bibliographystyle{ieeetr}
	\bibliography{thebibliography1.bib}

\end{document}